\documentclass[a4paper,twoside,12pt,reqno]{article}
\usepackage[utf8]{inputenc}
\usepackage[english]{babel}
\usepackage{graphicx}
\usepackage{multirow}
\usepackage{epstopdf}
\usepackage[T1]{fontenc}
\usepackage{amssymb,amsthm,dsfont,mathrsfs}
\usepackage{amsfonts,euscript}
\usepackage{color}
\usepackage{authblk}
\usepackage[fleqn]{amsmath}
\usepackage[dvipsnames]{xcolor}
\usepackage{mathtools}
\usepackage{fullpage}
\usepackage{setspace}
\usepackage{hyperref}
%\doublespacing
\newtheorem{defi}{Definition}[section]
\newtheorem{teo}{Theorem}[section]
\newtheorem{pro}[teo]{Proposition}

\newtheorem{ex}{Example}[section]
\hyphenation{sta-ti-o-na-ry}

\begin{document}

\renewcommand{\baselinestretch}{1.5}	
	
	\title{Multivariate Medial Correlation with applications}

		\author{Helena Ferreira}
		\affil{Universidade da Beira Interior, Centro de Matem\'{a}tica e Aplica\c{c}\~oes (CMA-UBI), Avenida Marqu\^es d'Avila e Bolama, 6200-001 Covilh\~a, Portugal, \texttt{helenaf@ubi.pt}}

		\author{Marta Ferreira}
		\affil{Center of Mathematics of Minho University, Center for Computational and Stochastic Mathematics of University of Lisbon,
			Center of Statistics and Applications of University of Lisbon, Portugal, \texttt{msferreira@math.uminho.pt} }

\date{}
	
	\maketitle

\abstract{We define a multivariate medial correlation coefficient that extends the probabilistic interpretation and properties of Blomqvist's $\beta$ coefficient, incorporates multivariate marginal dependencies and it preserves a stronger multivariate concordance relation. We determine the maximum and minimum values attainable and illustrate the results in some models. We end with an application on real datasets.}

\bigskip

\noindent\textbf{keywords:} {Blomqvist $\beta$, multivariate medial correlation, multivariate concordance measure}\\

\noindent\textbf{MSC2020 Subject Classification}: 62H20\\

%-------------------------------------------INTRODUÇÃO--------------
\section{Introduction}\label{sintro}

Let us consider that ${\bf{X}}=(X_1,X_2)$ is a real random vector, over the probability space $(\Omega, {\cal{A}}, P)$, with continuous marginal distribution functions $F_{X_i}$, $i=1,2$, and let $(U_1,U_2)$ represent the corresponding uniformized vector, that is, $U_i= F_{X_i}(X_i)$, $i=1,2$.

The medial correlation coefficient of $(X_1,X_2)$, which we will represent by $\beta(X_1,X_2)$ or $\beta({\bf{X}})$, is defined by
\begin{eqnarray}\label{betabiv}
\beta(X_1,X_2)=P\left(\left(U_1-\frac{1}{2}\right)\left(U_1-\frac{1}{2}\right)>0\right)-P\left(\left(U_1-\frac{1}{2}\right)\left(U_1-\frac{1}{2}\right)<0\right).
\end{eqnarray}

The $\beta$ coefficient introduced by Blomqvist (\cite{Blom}), has its value in $ [- 1,1] $ and compares the propensity for the margins of $(X_1,X_2)$ to take both values above or both values below their respective medians, with the propensity for the occurrence of the contrary event.

Since  
\begin{eqnarray}\label{betabivcopulas2}
\beta(X_1,X_2) =2\left(P\left(U_1>\frac{1}{2},U_2>\frac{1}{2}\right)+P\left(U_1<\frac{1}{2},U_2<\frac{1}{2}\right)\right)-1,
\end{eqnarray}  
and
\begin{eqnarray}\label{betabivcopulas3}
\beta(X_1,X_2) =4P\left(U_1<\frac{1}{2},U_2<\frac{1}{2}\right)-1,
\end{eqnarray}  
if $C_{\bf{X}}(u_1,u_2)$ and ${\hat{C}}_{\bf{X}}(u_1,u_2)$, $(u_1,u_2)\in [0,1]^2$, represent the copula and the survival copula of ${\bf{X}}$ ( Nelsen \cite{Nelsen}), respectively, we can say that 
%from (\ref{betabivcopulas2}) and (\ref{betabivcopulas3})
\begin{eqnarray}\label{betabivcopulas4}
\beta(X_1,X_2) =2\left(C_{\bf{X}}\left(\frac{1}{2},\frac{1}{2}\right)+{\hat{C}}_{\bf{X}}\left(\frac{1}{2},\frac{1}{2}\right)\right)-1,
\end{eqnarray} 
and
\begin{eqnarray}\label{betabivcopulas5}
\beta(X_1,X_2) =4C_{\bf{X}}\left(\frac{1}{2},\frac{1}{2}\right)-1.
\end{eqnarray} 
The bivariate medial correlation coefficient $\beta(X_1,X_2)$ enables to compare  $C_{\bf{X}}(u_1,u_2)$ on 
 $Q_L\cup Q_U=\left[0,\frac{1}{2}\right]^2\,\, \cup\,\, \left]\frac{1}{2},1\right]^2$ with $C_{\bf{X}}(u_1,u_2)$ on
$[0,1]^2\setminus \left(Q_L\cup Q_U\right)$ or to compare  $C_{\bf{X}}(u_1,u_2)$ on $Q_L=\left[0,\frac{1}{2}\right]^2$ with $C_{\bf{X}}(u_1,u_2)$ on $[0,1]^2\setminus Q_L$. 

The medial correlation coefficient can be related to other summary measures of dependence in $(X_1,X_2)$, or in $C_{\bf{X}}$, such as  Spearman’s $\rho$ or Kendall’s $\tau$ ( Nelsen \cite{Nelsen}, Joe \cite{Joe1}, Lebedev \cite{Leb} and references therein).

Two bivariate vectors ${\bf{X}}$ and ${\bf{Y}}$, or their copulas, can be partially ordered by punctually comparing their copulas. We say that ${\bf{X}}$ is less concordant than  ${\bf{Y}}$, and we write for that  ${\bf{X}}{\prec}_{c} {\bf{Y}}$, if $C_{\bf{X}}(u_1,u_2)\leq C_{\bf{Y}}(u_1,u_2)$,  $(u_1,u_2)\in [0,1]^2$, or equivalent, if  ${\hat{C}}_{\bf{X}}(u_1,u_2)\leq {\hat{C}}_{\bf{Y}}(u_1,u_2)$,  $(u_1,u_2)\in [0,1]^2$ (Nelsen  \cite{Nelsen}).

Thus, from the representations (\ref{betabivcopulas4}) or (\ref{betabivcopulas5}), we verify that 
\begin{eqnarray}\label{prec}
{\textrm{if}}\,\,\,  {\bf{X}}{\prec}_{c} {\bf{Y}}\,\,\,  {\textrm{then}}\,\,\,  \beta({\bf{X}})\leq \beta({\bf{Y}}).
\end{eqnarray}

 In addition to the increasing with concordance ordering, the bivariate medial correlation coefficient $\beta$ satisfies other properties that shape  the definition of measure of concordance according to Scarsini (\cite{Scarsini}).

Considering the product and minimum copulas, respectively, $C_{\Pi}(u_1,u_2)=u_1u_2$ and 
$C_{M}(u_1,u_2)=u_1\wedge u_2$, $(u_1,u_2)\in [0,1]^2$, we have $C_{\Pi} \prec_{c} C_{\bf{X}} \prec_{c} C_{M}$, $\beta(C_{\Pi})=0$, $\beta(C_{M})=1$ and we can also represent $\beta(X_1,X_2)$ by
\begin{eqnarray}\label{betabivcopulas-dif}
\beta(X_1,X_2) =2\left(C_{\bf{X}}\left(\frac{1}{2},\frac{1}{2}\right)-C_{\Pi}\left(\frac{1}{2},\frac{1}{2}\right)+{\hat{C}}_{\bf{X}}\left(\frac{1}{2},\frac{1}{2}\right)-{\hat{C}}_{\Pi}\left(\frac{1}{2},\frac{1}{2}\right)\right).
\end{eqnarray} 

For a random vector ${\bf{X}}=(X_1,...,X_d)$ with dimension $d>2$, if we think about generalizing   (\ref{betabiv}) to  $P\left(\displaystyle\prod_{i=1}^{d}\left(U_i-\frac{1}{2}\right)>0\right)-P\left(\displaystyle\prod_{i=1}^{d}\left(U_i-\frac{1}{2}\right)<0\right)$ we definitely loose:\\
(i) interpretation as a measure of propensity for all margins to exceed their respective medians or all margins to be below their medians, and\\
(ii) information about the behaviour of $C_{\bf{X}}$ on  $Q_{k}=\displaystyle\prod_{j=1}^{d}I_j$, $k=1,...,d-1$,  where $I_j=\left[0,\frac{1}{2}\right]$ for $k$ or $d-k$ values of $j$ and $I_j=\left]\frac{1}{2},1\right]$ for the others.\\
On the other hand, any generalization of $\beta$ in the multivariate context must preserve at least the property (i) and also verify\\ 
(iii) $\beta(C_{\Pi})=0$ and $\beta(C_{M})=1$.

The proposals of Nelsen (\cite{Nelsen2}), Úbeda-Flores (\cite{U-F}) and  Schmid and Schmidt (\cite{S+S}) manage to keep (i) and (iii) above.

Starting from the multivariate version of (\ref{betabivcopulas5}), $4C_{\bf{X}}(\frac{1}{2},...,\frac{1}{2})-1$,  rescaled by considering the quotient between its distance to the corresponding value for  $C_{\Pi}$ and the maximum value of that distance,
\begin{eqnarray}\label{Nelsenbetamulticopulas}
\begin{array}{ll}
\vspace{0,5cm}
\displaystyle\beta'(X_1,...,X_d) &=\displaystyle\frac{4C_{\bf{X}}\left(\frac{1}{2},...,\frac{1}{2}\right)-1- \left(4\left(\frac{1}{2}\right)^{d}-1\right)}{4C_{M}\left(\frac{1}{2},...,\frac{1}{2}\right)-1- \left(4\left(\frac{1}{2}\right)^{d}-1\right)}\\
&=\displaystyle\frac{2^{d}C_{\bf{X}}\left(\frac{1}{2},...,\frac{1}{2}\right)-1}{2^{d-1}-1},
\end{array}
\end{eqnarray} 
we find Nelsen's generalization  (\cite{Nelsen2}).

Úbeda-Flores (\cite{U-F}) proposes the extension of (\ref{betabivcopulas4}) in
\begin{eqnarray}\label{partida}
\begin{array}{ll}
2\left(C_{\bf{X}}\left(\frac{1}{2},...,\frac{1}{2}\right)+{\hat{C}}_{\bf{X}}\left(\frac{1}{2},...,\frac{1}{2}\right)\right)-1,
\end{array}
\end{eqnarray} 
also rescaled by considering the quotient between its distance to the corresponding value for  $C_{\Pi}$ and the maximum value of that distance. In this way, we obtain the following  generalization of $\beta$, which we will denote by $\beta^*$ and  where $\frac{\bf{1}}{\bf{2}}$ represents the vector of suitable size and coordinates all equal to $\frac{1}{2}$:

\begin{eqnarray}\label{H-Fbetamulticopulas}
\begin{array}{ll}
\vspace{0,5cm}
\displaystyle\beta^*(X_1,...,X_d) &=\displaystyle\frac{2\left(C_{\bf{X}}\left(\frac{1}{2},...,\frac{1}{2}\right)+{\hat{C}}_{\bf{X}}\left(\frac{1}{2},...,\frac{1}{2}\right)\right)-1- \left(\frac{1}{2^{d-2}}-1\right)}{2\left(C_{M}\left(\frac{1}{2},...,\frac{1}{2}\right)+{\hat{C}}_{M}\left(\frac{1}{2},...,\frac{1}{2}\right)\right)-1- \left(\frac{1}{2^{d-2}}-1\right)}\\
&=\displaystyle\frac{2^{d-1}\left(C_{\bf{X}}\left(\frac{\bf{1}}{\bf{2}}\right)+{\hat{C}}_{\bf{X}}\left(\frac{\bf{1}}{\bf{2}}\right)\right)-1}{2^{d-1}-1},
\end{array}
\end{eqnarray} 
which coincides with (\ref{Nelsenbetamulticopulas}) when $C={\hat{C}}$.

Reasoning in an equivalent way about (\ref{betabivcopulas-dif}), Schmid and Schmidt (\cite{S+S}) propose 
\begin{eqnarray}\nonumber%\label{S+Sbetamulticopulas}
\begin{array}{ll}
\displaystyle\frac{2\left(C_{\bf{X}}\left(\frac{\bf{1}}{\bf{2}}\right)-C_{\Pi}\left(\frac{\bf{1}}{\bf{2}}\right)+{\hat{C}}_{\bf{X}}\left(\frac{\bf{1}}{\bf{2}}\right)-{\hat{C}}_{\Pi}\left(\frac{\bf{1}}{\bf{2}}\right)\right)}{2\left(C_{M}\left(\frac{\bf{1}}{\bf{2}}\right)-C_{\Pi}\left(\frac{\bf{1}}{\bf{2}}\right)+{\hat{C}}_{M}\left(\frac{\bf{1}}{\bf{2}}\right)-{\hat{C}}_{\Pi}\left(\frac{\bf{1}}{\bf{2}}\right)\right)}&=
\displaystyle\frac{2^{d-1}\left(C_{\bf{X}}\left(\frac{\bf{1}}{\bf{2}}\right)+{\hat{C}}_{\bf{X}}\left(\frac{\bf{1}}{\bf{2}}\right)\right)-1}{2^{d-1}-1},
\end{array}
\end{eqnarray}
finding again the expression of Úbeda-Flores (\cite{U-F}). In addition to this extension, Schmid and Schmidt (\cite{S+S}) make a detailed study of a function resulting from a rescaling of  $C_{\bf{X}}({\bf{u}})+{\hat{C}}_{\bf{X}}({\bf{v}})$, ${\bf{u}}, {\bf{v}} \in [0,1]^d$, putting emphasis on the tail regions of the copula which determine the degree of large co-movements between the marginal random variables.

In order to keep (i), (ii) and (iii), we have Joe's sophisticated proposal (\cite{Joe2}) with an axiomatic on linear combinations of $\displaystyle C_{\sigma_{i_1}\sigma_{i_2}...\sigma_{i_k}\bf{X}}\left(\frac{\bf{1}}{\bf{2}}\right)$ and $\displaystyle \hat{C}_{\sigma_{i_1}\sigma_{i_2}...\sigma_{i_k}\bf{X}}\left(\frac{\bf{1}}{\bf{2}}\right)$, $1\leq i_1<...<i_K\leq d$, $k=[\frac{d+1}{2}],...,d$, where $\sigma_{j}\bf{X}$ denotes the j-th reflection of $\bf{X}$, that is, the vector $(X_1,...X_{j-1},-X_j,X_{j+1},...,X_d)$. Joe's axiomatic definition allows for various extensions of $\beta$, including those mentioned above and the arithmetic mean of
 $\beta(X_i,X_j)$, $1\leq i<j\leq d$. 

The extensions referred for $\beta$ increase with the  multivariate concordance (Joe \cite {Joe3}). We say that ${\bf{X}}=(X_1,...,X_d)$ is less concordant than ${\bf{Y}}=(Y_1,...,Y_d)$, or $C_{\bf{X}}$ is less concordant than  $C_{\bf{Y}}$, and in this case we write  ${\bf{X}}\prec_{c} {\bf{Y}}$, when we have
\begin{eqnarray}\label{concordancia}
\begin{array}{ll}
C_{\bf{X}}({\bf{u}})\leq C_{\bf{Y}}({\bf{u}})\,\,\,
\textrm{and}\,\,\,
\hat{C}_{\bf{X}}({\bf{u}})\leq \hat{C}_{\bf{Y}}(\bf{u}),
\end{array}
\end{eqnarray} 
for $\bf{u} \in [0,1]^d$. In the case of $d=2$ the two conditions are equivalent, as we have already mentioned.\\

The above proposed generalizations start from extensions of the representations of bivariate $\beta$ in terms of copulas, considering the corresponding multivariate copulas.\\
The proposal that we will make, in the next section, for a multivariate correlation coefficient $\beta(\bf{X})$ starts from  a generalization of the probabilistic interpretation of the definition (\ref{betabiv}) and satisfies almost all the desirable properties for a multivariate concordance measure (Taylor \cite{Taylor1},\cite{Taylor2}). It preserves a stronger multivariate concordance relation that we introduce in section \ref{propriedcoeficiente}. We present several representations for $\beta(\bf{X})$, we demonstrate the main properties, relate it to the previously mentioned coefficients and illustrate with examples and applications.

\section{Motivation for the multivariate medial correlation coefficient}\label{coeficientesmulti}

For $d\geq 2$, $D=\{1,...,d\}$, $I\subset D$, ${\bf{X}}=(X_1,...,X_d)$ with continuous marginal distributions and ${\bf{U}}=(U_1,...,X_d)=\left(F_{X_1}(X_1),...,F_{X_d}(X_d)\right)$, we define
\begin{eqnarray}\label{M-W}
\begin{array}{ll}
M(I)=\displaystyle\bigvee_{i\in I} U_i\,\,\,
\textrm{and}\,\,\,
W(I)=\displaystyle\bigwedge_{i\in I} U_i,
\end{array}
\end{eqnarray}
where  $\vee $ and $\wedge$ are the notations for the maximum and minimum operators, respectively.

When further clarification is needed, we write $M_{\bf{X}}(I)$ and $W_{\bf{X}}(I)$. 
Inequalities between vectors are understood by corresponding inequalities between homologous coordinates. By ${\bf{X}}_I$ we understand the subvector of ${\bf{X}}$ with margins in $I$ and ${\cal{P}}(D)$ represents the family of subsets of $D$.

Let's fix disjoint $I$ and $J$ in  ${\cal{P}}(D)$. The propensity for margins of ${\bf{X}}_I$ and margins of ${\bf{X}}_J$ simultaneously taking values below the respective medians or simultaneously values above the respective medians is evaluated by $C_{{\bf{X}}_{I\cup J}}(\frac{\bf{1}}{\bf{2}})+\hat{C}_{{\bf{X}}_{I\cup J}}(\frac{\bf{1}}{\bf{2}})$, that is, the probability of ${\bf{U}}_{I\cup J}$ taking values in $\left[0,\frac{1}{2}\right]^{|I\cup J|}\,\cup \,\left]\frac{1}{2},1\right]^{|I\cup J|}$. If we want to compare this probability with the probability of $\displaystyle{\bf{U}}_{I\cup J}$ taking values in $[0,1]^{|I\cup J|}\setminus \left(\left[0,\frac{1}{2}\right]^{|I\cup J|}\,\cup \, \left]\frac{1}{2},1\right]^{|I\cup J|}\right)$, we can do it briefly by calculating the coefficients
\begin{eqnarray}\label{beta de maximos}
\begin{array}{ll}
\vspace{0,5cm}
&\beta\left(M(I),M(J)\right):=\\
:=&P\left(\left(M(I)-\frac{1}{2}\right)\left(M(J)-\frac{1}{2}\right)>0\right)-P\left(\left(M(I)-\frac{1}{2}\right)\left(M(J)-\frac{1}{2}\right)<0\right)\\
=&2\left(P\left(M(I)>\frac{1}{2},M(J)>\frac{1}{2}\right)+P\left(M(I)< \frac{1}{2},M(J)< \frac{1}{2}\right)\right)-1
\end{array}
\end{eqnarray}
and 
\begin{eqnarray}\label{beta de minimos}
\begin{array}{ll}
\vspace{0,5cm}
&\beta(W(I),W(J)):=\\
:=&P\left(\left(W(I)-\frac{1}{2}\right)\left(W(J)-\frac{1}{2}\right)>0\right)-P\left(\left(W(I)-\frac{1}{2}\right)\left(W(J)-\frac{1}{2}\right)<0\right)\\
=&2\left(P\left(W(I)>\frac{1}{2},W(J)>\frac{1}{2}\right)+P\left(W(I)< \frac{1}{2},W(J)< \frac{1}{2}\right)\right)-1.
\end{array}
\end{eqnarray}

\vspace{0,5cm}
Let us make some comments about
\begin{eqnarray}\label{mediadebetas}
\begin{array}{ll}
\displaystyle\beta_{I,J}({\bf{X}}):=\displaystyle\frac{\beta(M(I),M(J))+\beta(W(I),W(J))}{2}.
\end{array}
\end{eqnarray}

(i) The expressions (\ref{beta de maximos}), (\ref{beta de minimos}) and (\ref{mediadebetas}) have $\beta(X_i,X_j)$ as a particular case,  if we take $I=\{i\}$ and $J=\{j\}$.\\
If $I=D$, $J=\emptyset$ and we consider that $M(\emptyset)=-\infty$ and $W(\emptyset)=+\infty$, then (\ref{mediadebetas}) is equal to $C_{\bf{X}}\left(\frac{\bf{1}}{\bf{2}}\right)+{\hat{C}}_{\bf{X}}\left(\frac{\bf{1}}{\bf{2}}\right)-1$, which can be rescaled in order to obtain  the proposal of Úbeda-Flores (\cite{U-F}) and Schmid and Schmidt (\cite{S+S}).\\

(ii) Since $\displaystyle\beta_{I,J}({\bf{X}})$ is defined as an average of bivariate coefficients, it can be estimated by the methods available for the bivariate context (Blomqvist \cite {Blom}, Schmid and Schmidt \cite{S+S} and references therein).

%(iii) From  (\ref{prec}), the value of (\ref{mediadebetas}) increases with the concordances of $(M(I),M(J))$ and $(W(I),W(J))$, that is, for two vectors of dimension $d$, ${\bf{X}}$ and ${\bf{Y}}$, if 
%\begin{eqnarray}\label{concordanciaM-W}
%\begin{array}{ll}
%P\left(M_{\bf{X}}(I)\leq u, M_{\bf{X}}(J)\leq v\right)&\leq P\left(M_{\bf{Y}}(I)\leq u, M_{\bf{Y}}(J)\leq v\right),\\
%P\left(W_{\bf{X}}(I)> u, W_{\bf{X}}(J)> v\right)&\leq P\left(W_{\bf{Y}}(I)>u, W_{\bf{Y}}(J)> v\right),
%\end{array} 
%\end{eqnarray}
%$(u,v)\in [0,1]^2$, then $\beta_{I,J}({\bf{X}})\leq \beta_{I,J}({\bf{Y}})$. 
%
%Note that the first inequality is equivalent to the similar inequality for the survival functions of
%$(M_{\bf{.}}(I), M_{\bf{.}}(J))$ and the second is equivalent to the similar inequality for the distribution functions of $(W_{\bf{.}}(I), W_{\bf{.}}(J))$, since we are in the context of bivariate vectors.
%
%(iv) Since (\ref{concordanciaM-W}) is implied by ${\bf{X}}\prec_{c} {\bf{Y}}$, we can say that $\beta_{I,J}({\bf{X}})$  increases with the concordance of ${\bf{X}}$.

(iii) If $C_{{\bf{X}}}=C_{M}$ we have $\beta_{I,J}({\bf{X}})=1$ and if $C_{{\bf{X}}}=C_{\Pi}$ then $\beta_{I,J}({\bf{X}})=2^{2-|I|-|J|}-2^{1-|I|}-2^{1-|J|}+1=(2^{1-|I|}-1)(2^{1-|J|}-1)$, where $|A|$ denotes the cardinality of $A$. This value becomes null if and only if $|I|=1$ or $|J|=1$.

(iv) A linear combination of $\beta_{\{i\},\{j\}}({\bf{X}})$, $1\leq i<j\leq d$, takes into account the bivariate dependencies in ${\bf{X}}$, but if we consider some function of the coefficients $\beta_{I,J}({\bf{X}})$, with $I,J \in   {\cal{F}}$, for some family ${\cal{F}}\subset {\cal{P}}(D)$ containing sets with more than one element, then we will be incorporating multivariate marginal dependencies.\\

The definition we propose, in the next section, for a multivariate medial correlation coefficient, will be based on the bivariate coefficients $\beta_{\{i\},D\setminus \{i\}}({\bf{X}})$, $1\leq i\leq d$, incorporating the dependency between each margin $X_i$ and ${\bf{X}}_{D\setminus \{i\}}$, $1\leq i\leq d$.\\
Our proposal contains, as a particular case, the Blomqvist bivariate coefficient, extends the probabilistic interpretation (\ref{betabiv}), takes values in $[-1,1]$, becoming null naturally when $C_{\bf{X}}=C_{\Pi}$ and taking the maximum value when $C_{\bf{X}}=C_{M}$. The rest of the properties we proved allow us to consider it a measure for a  multivariate concordance relation stronger than concordance order.

\section{A multivariate medial correlation coefficient}\label{defcoeficiente}

\begin{defi}\label{defcoefmulti} The multivariate medial correlation coefficient of the vector ${\bf{X}}$ with dimension $d$, or of its copula $C_{\bf{X}}$, is defined as
\begin{eqnarray}\label{defbetamulti}
\beta({\bf{X}})=\displaystyle\frac{1}{d}\sum_{i=1}^d\beta_{\{i\},D\setminus \{i\}}({\bf{X}}),
\end{eqnarray}
where
\begin{eqnarray}\label{defbeta,i,D-i}
\displaystyle\beta_{\{i\},D\setminus \{i\}}({\bf{X}})=\displaystyle\frac{\beta\left(U_i,M(D\setminus \{i\})\right)+\beta\left(U_i,W(D\setminus \{i\})\right)}{2},\,\,\,i=1,...,d.
\end{eqnarray}
\end{defi}

%In what follows, for the sake of simplicity of writing, we sometimes write $D\setminus \{i\}$ rather than $D\setminus \{i\}$, $D_{\{i,j\}}$ for $D\setminus \{i,j\}$ and $\beta_{i,D\setminus \{i\}}({\bf{X}})$ for $\beta_{\{i\},D\setminus \{i\}}({\bf{X}})$.\\
Below we present some representations of $\beta({\bf{X}})$ that will be useful to clarify their properties and interpretation. The following
\begin{eqnarray}\label{beta,i,D-i}
\begin{array}{ll}
\beta_{i,D\setminus \{i\}}({\bf{X}})&=2\left(P\left(U_i<\frac{1}{2},M\left(D\setminus \{i\}\right)<\frac{1}{2}\right)+P\left(U_i>\frac{1}{2},W\left(D\setminus \{i\}\right)>\frac{1}{2}\right)\right)\\
&-P\left(M\left(D\setminus \{i\}\right)<\frac{1}{2})-P(W\left(D\setminus \{i\}\right)>\frac{1}{2}\right),
\end{array} 
\end{eqnarray}
holds, generalizing (\ref{betabivcopulas2}).
We also have
\begin{eqnarray}\label{beta,i,D-i-cópula}
\begin{array}{ll}
\beta_{i,D\setminus \{i\}}({\bf{X}})=2\left(C_{\bf{X}}\left(\frac{\bf{1}}{\bf{2}}\right)+{\hat{C}}_{\bf{X}}\left(\frac{\bf{1}}{\bf{2}}\right)\right)-C_{\bf{X}_{D\setminus \{i\}}}\left(\frac{\bf{1}}{\bf{2}}\right)-{\hat{C}}_{\bf{X}_{D\setminus \{i\}}}\left(\frac{\bf{1}}{\bf{2}}\right),
\end{array} 
\end{eqnarray}
generalizing  (\ref{betabivcopulas4}).
From the previous relation, it follows that
\begin{eqnarray}\label{beta,i,D-i-cópula2}
\begin{array}{ll}
\beta_{i,D\setminus \{i\}}({\bf{X}})=C_{\bf{X}}\left(\frac{\bf{1}}{\bf{2}}\right)+{\hat{C}}_{\bf{X}}\left(\frac{\bf{1}}{\bf{2}}\right)-C_{\sigma_i\bf{X}}\left(\frac{\bf{1}}{\bf{2}}\right)-{\hat{C}}_{\sigma_i\bf{X}}\left(\frac{\bf{1}}{\bf{2}}\right),
\end{array} 
\end{eqnarray}
where $\sigma_i\bf{X}$ is the i-th reflection of $\bf{X}$, that is, $\sigma_i{\bf{X}}=(X_1,....,X_{i-1},-X_i,X_{i+1},...,X_d)$ and therefore $C_{\sigma_i{\bf{X}}}(\frac{\bf{1}}{\bf{2}})=C_{(U_1,...,U_{i-1},1-U_i,U_{i+1},....,U_d)}(\frac{\bf{1}}{\bf{2}})$.
We then obtain the following ways of representing the coefficient $\beta$.

\begin{pro}\label{representaçoesBeta}
	The multivariate medial correlation coefficient of the vector ${\bf{X}}$ with dimension $d$,  admits the following representations:
\begin{eqnarray}\label{R1}
\begin{array}{rl}
 \beta({\bf{X}})=&2\left(P\left({\bf{U}}\leq \frac{\bf{1}}{\bf{2}}\right)+P\left({\bf{U}}> \frac{\bf{1}}{\bf{2}}\right)\right)\\
 &-\displaystyle\frac{1}{d}\sum_{i=1}^d\left(P\left({\bf{U}}_{D\setminus \{i\}}\leq \frac{\bf{1}}{\bf{2}}\right)+P\left({\bf{U}}_{D\setminus \{i\}}> \frac{\bf{1}}{\bf{2}}\right)\right),
 \end{array}
\end{eqnarray} 
\begin{eqnarray}\label{R2}
 \beta({\bf{X}})=2\left(C_{\bf{X}}\left(\frac{\bf{1}}{\bf{2}}\right)+{\hat{C}}_{\bf{X}}\left(\frac{\bf{1}}{\bf{2}}\right)\right)-\displaystyle\frac{1}{d}\sum_{i=1}^d\left(C_{\bf{X}_{D\setminus \{i\}}}\left(\frac{\bf{1}}{\bf{2}}\right)+{\hat{C}}_{\bf{X}_{D\setminus \{i\}}}\left(\frac{\bf{1}}{\bf{2}}\right)\right),
\end{eqnarray}
\begin{eqnarray}\label{R3}
\beta({\bf{X}})=C_{\bf{X}}\left(\frac{\bf{1}}{\bf{2}}\right)+{\hat{C}}_{\bf{X}}\left(\frac{\bf{1}}{\bf{2}}\right)-\displaystyle\frac{1}{d}\sum_{i=1}^d\left(C_{\sigma_i\bf{X}}\left(\frac{\bf{1}}{\bf{2}}\right)+{\hat{C}}_{\sigma_i\bf{X}}\left(\frac{\bf{1}}{\bf{2}}\right)\right).
\end{eqnarray}
\end{pro}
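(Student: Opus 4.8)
The plan is to obtain each of the three representations by inserting the corresponding single-index identity into the defining average (\ref{defbetamulti}) and simplifying. The one structural fact that makes all three collapse cleanly is that the quantity $C_{\bf{X}}\left(\frac{\bf{1}}{\bf{2}}\right)+\hat{C}_{\bf{X}}\left(\frac{\bf{1}}{\bf{2}}\right)$ — equivalently, the probability $P\left({\bf{U}}\leq\frac{\bf{1}}{\bf{2}}\right)+P\left({\bf{U}}>\frac{\bf{1}}{\bf{2}}\right)$ — does not depend on $i$; hence, under the operator $\frac{1}{d}\sum_{i=1}^d$ it is reproduced unchanged (carrying its coefficient), while only the $(d-1)$-dimensional marginal terms (or the reflection terms) are genuinely averaged.

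Concretely, for (\ref{R2}) I would start from (\ref{beta,i,D-i-cópula}), which expresses $\beta_{i,D\setminus\{i\}}({\bf{X}})$ as $2\left(C_{\bf{X}}\left(\frac{\bf{1}}{\bf{2}}\right)+\hat{C}_{\bf{X}}\left(\frac{\bf{1}}{\bf{2}}\right)\right)$ minus the pair $C_{{\bf{X}}_{D\setminus\{i\}}}\left(\frac{\bf{1}}{\bf{2}}\right)+\hat{C}_{{\bf{X}}_{D\setminus\{i\}}}\left(\frac{\bf{1}}{\bf{2}}\right)$; averaging over $i$ leaves the first summand intact and turns the second into $\frac{1}{d}\sum_{i=1}^d\left(C_{{\bf{X}}_{D\setminus\{i\}}}\left(\frac{\bf{1}}{\bf{2}}\right)+\hat{C}_{{\bf{X}}_{D\setminus\{i\}}}\left(\frac{\bf{1}}{\bf{2}}\right)\right)$, which is exactly (\ref{R2}). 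Representation (\ref{R3}) follows in the same way from (\ref{beta,i,D-i-cópula2}), with the reflection terms $C_{\sigma_i{\bf{X}}}\left(\frac{\bf{1}}{\bf{2}}\right)+\hat{C}_{\sigma_i{\bf{X}}}\left(\frac{\bf{1}}{\bf{2}}\right)$ in place of the marginal ones. Finally, (\ref{R1}) is just (\ref{R2}) rewritten through the probabilistic dictionary $C_{\bf{X}}\left(\frac{\bf{1}}{\bf{2}}\right)=P\left({\bf{U}}\leq\frac{\bf{1}}{\bf{2}}\right)$, $\hat{C}_{\bf{X}}\left(\frac{\bf{1}}{\bf{2}}\right)=P\left({\bf{U}}>\frac{\bf{1}}{\bf{2}}\right)$, together with the analogous identities $C_{{\bf{X}}_{D\setminus\{i\}}}\left(\frac{\bf{1}}{\bf{2}}\right)=P\left({\bf{U}}_{D\setminus\{i\}}\leq\frac{\bf{1}}{\bf{2}}\right)$ and $\hat{C}_{{\bf{X}}_{D\setminus\{i\}}}\left(\frac{\bf{1}}{\bf{2}}\right)=P\left({\bf{U}}_{D\setminus\{i\}}>\frac{\bf{1}}{\bf{2}}\right)$ for the margins indexed by $D\setminus\{i\}$; one may equally reach (\ref{R1}) by averaging the probabilistic identity (\ref{beta,i,D-i}) directly.

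The genuine work, and the step I expect to be the main obstacle, is not this averaging but the single-index identities (\ref{beta,i,D-i})–(\ref{beta,i,D-i-cópula2}) themselves; granting them, the proposition is immediate. To secure (\ref{beta,i,D-i}) I would specialize (\ref{beta de maximos}) and (\ref{beta de minimos}) to $I=\{i\}$ and $J=D\setminus\{i\}$ (so that $M(\{i\})=W(\{i\})=U_i$), and then exploit the reductions $\left\{M(D\setminus\{i\})<\frac{1}{2}\right\}=\bigcap_{j\neq i}\left\{U_j<\frac{1}{2}\right\}$ and $\left\{W(D\setminus\{i\})>\frac{1}{2}\right\}=\bigcap_{j\neq i}\left\{U_j>\frac{1}{2}\right\}$. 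Combining these with the complementary-event relation $P\left(U_i>\frac{1}{2},M(D\setminus\{i\})>\frac{1}{2}\right)=\frac{1}{2}-P\left(U_i>\frac{1}{2},M(D\setminus\{i\})<\frac{1}{2}\right)$ and its minimum counterpart, and using continuity of the margins to identify strict and non-strict inequalities up to null sets, one recognizes $P\left(U_i<\frac{1}{2},M(D\setminus\{i\})<\frac{1}{2}\right)=C_{\bf{X}}\left(\frac{\bf{1}}{\bf{2}}\right)$ and $P\left(U_i>\frac{1}{2},W(D\setminus\{i\})>\frac{1}{2}\right)=\hat{C}_{\bf{X}}\left(\frac{\bf{1}}{\bf{2}}\right)$, after which the four probabilities collapse to (\ref{beta,i,D-i-cópula}). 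The passage to (\ref{beta,i,D-i-cópula2}) then uses the partitions $C_{{\bf{X}}_{D\setminus\{i\}}}\left(\frac{\bf{1}}{\bf{2}}\right)=C_{\bf{X}}\left(\frac{\bf{1}}{\bf{2}}\right)+C_{\sigma_i{\bf{X}}}\left(\frac{\bf{1}}{\bf{2}}\right)$ and $\hat{C}_{{\bf{X}}_{D\setminus\{i\}}}\left(\frac{\bf{1}}{\bf{2}}\right)=\hat{C}_{\bf{X}}\left(\frac{\bf{1}}{\bf{2}}\right)+\hat{C}_{\sigma_i{\bf{X}}}\left(\frac{\bf{1}}{\bf{2}}\right)$, obtained by conditioning on whether $U_i$ lies below or above $\frac{1}{2}$ (the reflection in the $i$-th coordinate converting $U_i>\frac{1}{2}$ into the corresponding event for $\sigma_i{\bf{X}}$); substituting these into (\ref{beta,i,D-i-cópula}) yields (\ref{beta,i,D-i-cópula2}), and averaging over $i$ then delivers (\ref{R3}).
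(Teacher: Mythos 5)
Your proof is correct and takes essentially the same route as the paper: there, Proposition \ref{representaçoesBeta} is obtained precisely by averaging the single-index identities (\ref{beta,i,D-i}), (\ref{beta,i,D-i-cópula}) and (\ref{beta,i,D-i-cópula2}) over $i=1,\dots,d$ in Definition \ref{defcoefmulti}, with (\ref{R1}) being the probabilistic rewriting of (\ref{R2}). If anything, you are more complete than the paper, since you also verify those single-index identities themselves (via $\left\{M(D\setminus\{i\})<\frac{1}{2}\right\}=\bigcap_{j\neq i}\left\{U_j<\frac{1}{2}\right\}$, its counterpart for $W$, continuity of the margins, and the decomposition according to whether $U_i$ lies below or above $\frac{1}{2}$), which the paper states without proof.
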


The relation (\ref{R3}) rewritten in the form
\begin{eqnarray}\nonumber%\label{R3}
\beta({\bf{X}})=\displaystyle\frac{1}{d}\sum_{i=1}^d\left(C_{\bf{X}}\left(\frac{\bf{1}}{\bf{2}}\right)-C_{\sigma_i\bf{X}}\left(\frac{\bf{1}}{\bf{2}}\right)+{\hat{C}}_{\bf{X}}\left(\frac{\bf{1}}{\bf{2}}\right)-{\hat{C}}_{\sigma_i{\bf{X}}}\left(\frac{\bf{1}}{\bf{2}}\right)\right),
\end{eqnarray}
reinforces the idea that $\beta({\bf{X}})$ compares  the propensity of each margin $X_i$ to agree with the remaining margins together, ${\bf{X}}_{D\setminus \{i\}}$, and the propensity to disagree with them, when they are all above or all below their respective medians.

In the following, we establish relationships between $\beta({\bf{X}})$ and the generalizations referred to in the introduction.
By applying the definition (\ref{H-Fbetamulticopulas}) of $\beta^*$,  we conclude from the representation  ($\ref{R3})$ that
\begin{eqnarray}\nonumber
\begin{array}{ll}
\beta({\bf{X}})&=\displaystyle\frac{(2^{d-1}-1)\beta^*({\bf{X}})+1}{2^{d-1}}-\displaystyle\frac{1}{d}\sum_{i=1}^d\displaystyle\frac{(2^{d-1}-1)\beta^*(\sigma_i{\bf{X}})+1}{2^{d-1}}\\
&=\displaystyle\frac{(2^{d-1}-1)}{2^{d-1}}\left(\beta^*({\bf{X}})-\displaystyle\frac{1}{d}\sum_{i=1}^d\beta^*(\sigma_i{\bf{X}})\right).
\end{array}
\end{eqnarray}
By defining ${\bar{N}}=\displaystyle\sum_{i=1}^d\mathbf{1}_{\{U_i>\frac{1}{2}\}}$, the representation   (\ref{R3}) of $\beta$ leads to
\begin{eqnarray}\label{excedências}
\begin{array}{ll}
\beta({\bf{X}})=P({\bar{N}}=0)+P({\bar{N}}=d)-\displaystyle\frac{1}{d}\left(P({\bar{N}}=1)+P({\bar{N}}=d-1)\right).
\end{array}
\end{eqnarray}
That fits Joe's representation (3.1.1) (\cite{Joe2}) with $w_d=1$, $w_{d-1}=-\frac{1}{d}$ and the remaining weights  $w_i$ equal to zero.

Note that in the $3$-dimensional case, the multivariate medial correlation coefficient $\beta$ satisfies
\begin{eqnarray}\nonumber
\begin{array}{rl}
&\beta({\bf{X}})=\frac{4}{3}C_{\bf{X}}\left(\frac{\bf{1}}{\bf{2}}\right)+\frac{4}{3}\hat{C}_{\bf{X}}\left(\frac{\bf{1}}{\bf{2}}\right)-\frac{1}{3}=\displaystyle\beta^{*}({\bf{X}})=\frac{\beta(X_1,X_2)+\beta(X_1,X_3)+\beta(X_2,X_3)}{3}.
\end{array}
\end{eqnarray}
Thus, in the $3$-dimensional case $\beta$ equals $\beta^{*}$ and hence allows a different view on Blomqvist's $\beta$ discussed in Úbeda-Flores (\cite{U-F}).

We refer the properties of $\beta({\bf{X}})$ in the next section and end this one with three examples.

\begin{ex}\label{ex1}
	Consider $C_{\bf{X}}(u_1,...,u_4)=\left(u_1^{\delta}\wedge u_2\right)u_1^{1-\delta}\left(u_3^{\alpha}\wedge u_4\right)u_3^{1-\alpha}$, with $0\leq \delta, \alpha\leq 1$, that is, $C_{\bf{X}}$ is the product of two Marshall-Olkin survival copulas (\cite{Joe3}).
It holds that
\begin{eqnarray}\nonumber
\begin{array}{ll}
\vspace{0,5cm}
\displaystyle C_{\bf{X}}\left(\frac{\bf{1}}{\bf{2}}\right)={\hat{C}}_{\bf{X}}\left(\frac{\bf{1}}{\bf{2}}\right)=\displaystyle\left(\frac{1}{2}\right)^{4-\delta- \alpha},\\
\vspace{0,5cm}
\displaystyle C_{{\bf{X}}_{D\setminus{\{1\}}}}\left(\frac{\bf{1}}{\bf{2}}\right)={\hat{C}}_{{\bf{X}}_{D\setminus{\{1\}}}}\left(\frac{\bf{1}}{\bf{2}}\right)=C_{{\bf{X}}_{D\setminus{\{2\}}}}\left(\frac{\bf{1}}{\bf{2}}\right)={\hat{C}}_{{\bf{X}}_{D\setminus{\{2\}}}}\left(\frac{\bf{1}}{\bf{2}}\right)=\displaystyle\left(\frac{1}{2}\right)^{3- \alpha},\\

\displaystyle C_{{\bf{X}}_{D\setminus{\{3\}}}}\left(\frac{\bf{1}}{\bf{2}}\right)={\hat{C}}_{{\bf{X}}_{D\setminus{\{3\}}}}\left(\frac{\bf{1}}{\bf{2}}\right)=C_{{\bf{X}}_{D\setminus{\{4\}}}}\left(\frac{\bf{1}}{\bf{2}}\right)={\hat{C}}_{{\bf{X}}_{D\setminus{\{4\}}}}\left(\frac{\bf{1}}{\bf{2}}\right)=\displaystyle\left(\frac{1}{2}\right)^{3- \delta}.
\end{array}
\end{eqnarray}
Therefore,
\begin{eqnarray}\nonumber
\begin{array}{ll}
\beta({\bf{X}})=\displaystyle 2^{\delta+ \alpha-2}-2^{\alpha-3}-2^{\delta-3}.
\end{array}
\end{eqnarray}
In the case of $\delta=\alpha=0$ the result  agrees with what we expect, since in this case the margins of ${\bf{X}}$ are independent. The expression obtained can be related to $\beta(X_1,X_2)$ and $\beta(X_3,X_4)$
through
\begin{eqnarray}\nonumber
\begin{array}{ll}
\beta({\bf{X}})&=2 \times 2^{\delta+ \alpha-3}-2^{\alpha-3}-2^{\delta-3}=\left(2^{\delta+ \alpha-3}-2^{\alpha-3}\right)+\left(2^{\delta+ \alpha-3}-2^{\delta-3}\right)\\
&=2^{\alpha-3}\left(2^{\delta}-1\right)+2^{\delta-3}\left(2^{\alpha}-1\right)\\
&=2^{\alpha-3}\beta(X_1,X_2)+2^{\delta-3}\beta(X_3,X_4),
\end{array}
\end{eqnarray}
We verify that $\beta({\bf{X}})$ increases with $\delta$ and $\alpha$, generalizing what we already knew to
$\beta(X_1,X_2)$ and $\beta(X_3,X_4)$. Therefore $\beta({\bf{X}})$ increases with the concordance of  ${\bf{X}}$.\\
\end{ex}

\begin{ex}\label{ex2}
 Let us consider that ${\bf{X}}$ has a trivariate  Gumbel copula $C_{\bf{X}}(u_1,u_2,u_3)=\displaystyle\exp\left\{-\left(\sum_{i=1}^3\left(-\ln u_i\right)^{1/\delta}\right)^{\delta}\right\}$, with $0<\delta \leq 1$.
It holds that
\begin{eqnarray}\nonumber
\begin{array}{ll}
C_{\bf{X}}\left(\frac{\bf{1}}{\bf{2}}\right)=2^{-3^{\delta}},\,\,\,{\hat{C}}_{\bf{X}}\left(\frac{\bf{1}}{\bf{2}}\right)=3\times 2^{-2^{\delta}}-2^{-3^{\delta}}-2^{-1}
\end{array}
\end{eqnarray}
and
\begin{eqnarray}\nonumber
\begin{array}{ll}
C_{{\bf{X}}_{D\setminus \{i\}}}(\frac{\bf{1}}{\bf{2}})={\hat{C}}_{{\bf{X}}_{D\setminus \{i\}}}(\frac{\bf{1}}{\bf{2}})=2^{-2^{\delta}},\,\,\, \textrm{for}\,\,\, i=1,2,3.
\end{array}
\end{eqnarray}
Therefore, we obtain $\beta({\bf{X}})=2^{2-2^{\delta}}-1$, coincident with $\beta(X_i,X_j)$, $1\leq i<j\leq 3$.

With simple calculations we can also conclude that
$$\beta(-X_1,X_2,X_3)=\frac{-2^{2-2^{\delta}}+1}{3}$$ and  that
$$\beta(X_1,X_2,X_3)+\beta(-X_1,X_2,X_3)=\frac{2}{2+1}\beta(X_2,X_3),$$
 which corresponds to the verification in this example of a transition property that we present in the next section.
Before we present the general expression of the multivariate correlation coefficient for a  Gumbel distribution of dimension $d\geq 1$, let’s also calculate it specifically for $d=4$.

We have 
\begin{eqnarray}\nonumber
\begin{array}{ll}
C_{\bf{X}}\left(\frac{\bf{1}}{\bf{2}}\right)=2^{-4^{\delta}},\,\,\,{\hat{C}}_{\bf{X}}\left(\frac{\bf{1}}{\bf{2}}\right)=-1+6\times 2^{-2^{\delta}}-4\times 2^{-3^{\delta}}+2^{-4^{\delta}},
\end{array}
\end{eqnarray}
and
\begin{eqnarray}\nonumber
\begin{array}{ll}
C_{{\bf{X}}_{D\setminus \{i\}}}\left(\frac{\bf{1}}{\bf{2}}\right)=2^{-3^{\delta}},\,\,
{\hat{C}}_{{\bf{X}}_{D\setminus \{i\}}}\left(\frac{\bf{1}}{\bf{2}}\right)=3\times 2^{-2^{\delta}}-2^{-3^{\delta}}-2^{-1},\,\,\textrm{for}\,\, i=1,2,3.
\end{array}
\end{eqnarray}

Then  
$$\beta(X_1,X_2,X_3,X_4)=4\times 2^{-4^{\delta}}-8\times 2^{-3^{\delta}}+9\times 2^{-2^{\delta}}-\frac{3}{2}.$$
These results for $d=2,3,4$, calculated directly, can also be obtained from the following general result.

If $d$ is even, we have
$$\beta({\bf{X}})=\displaystyle\frac{1-d}{2}+\sum_{k=1}^{d-2} \left(\left(^{d-1}_{\,\,\,k}\right)+\left(^{\,\,\,d}_{k+1}\right)\right)(-1)^{k+1}2^{-(k+1)^{\delta}} +4\times 2^{-d^{\delta}}+(-1)^{d-1}2^{-(d-1)^{\delta}},$$
(considering that a sum with the initial value of the counter greater than the final one is null) and if $d$ is odd, we have
$$\beta({\bf{X}})=\displaystyle\frac{1-d}{2}+\sum_{k=1}^{d-2} \left(\left(^{d-1}_{\,\,\,k}\right)+\left(^{\,\,\,d}_{k+1}\right)\right)(-1)^{k+1}2^{-(k+1)^{\delta}} -2^{-(d-1)^{\delta}}.$$
\end{ex}

The third example also serves as a motivation for one of the properties in the next section, on the best lower limit of $\beta({\bf{X}})$.

\begin{ex}\label{ex3}
Consider ${\bf{X}}$ of dimension $d$ such that ${\bf{U}}=(U,1-U,U_3,...,U_d)$. Then 
\begin{eqnarray}\nonumber
\begin{array}{rl}
\vspace{0,5cm}
\beta({\bf{X}})=&2\times (0+0)\\
&-\displaystyle\frac{1}{d}\left(C_{{\bf{X}}_{D\setminus {\{1\}}}}\left(\frac{\bf{1}}{\bf{2}}\right)+{\hat{C}}_{{\bf{X}}_{D\setminus {\{1\}}}}\left(\frac{\bf{1}}{\bf{2}}\right)+C_{{\bf{X}}_{D\setminus {\{2\}}}}\left(\frac{\bf{1}}{\bf{2}}\right)+{\hat{C}}_{{\bf{X}}_{D\setminus {\{2\}}}}\left(\frac{\bf{1}}{\bf{2}}\right)+0\right)\\\\
\vspace{0.5cm}
=&-\displaystyle\frac{1}{d}\left(C_{{\bf{X}}_{D\setminus {\{1\}}}}\left(\frac{\bf{1}}{\bf{2}}\right)+C_{{\bf{X}}_{D\setminus {\{2\}}}}\left(\frac{\bf{1}}{\bf{2}}\right)+{\hat{C}}_{{\bf{X}}_{D\setminus {\{1\}}}}\left(\frac{\bf{1}}{\bf{2}}\right)+{\hat{C}}_{{\bf{X}}_{D\setminus {\{2\}}}}\left(\frac{\bf{1}}{\bf{2}}\right)\right)\\
\vspace{0.5cm}
=&-\displaystyle\frac{1}{d}\left(C_{{\bf{X}}_{D\setminus {\{1,2\}}}}\left(\frac{\bf{1}}{\bf{2}}\right)+{\hat{C}}_{{\bf{X}}_{D\setminus {\{1,2\}}}}\left(\frac{\bf{1}}{\bf{2}}\right)\right).
\end{array}
\end{eqnarray}
It follows that  $\beta({\bf{X}})\geq -\frac{1}{d}$ and if, in particular $(U_3,...,U_d)=(V,...,V)$, then $\beta({\bf{X}})= -\frac{1}{d}$.
\end{ex}

\section{Properties of the multivariate medial correlation coefficient}\label{propriedcoeficiente}

Since the  coefficients $\beta_{\{i\},D\setminus \{i\}}({\bf{X}})$, $i=1,...,d$, take values in $[-1,1]$, the proposed coefficient takes values in the same range, being null for $C_{{\bf{X}}}=C_{\Pi}$. The maximum value is attainable when $C_{{\bf{X}}}=C_{M}=1$ and the minimum attainable value is equal to  $-\frac{1}{d}$. In fact, from the representation (\ref{R3}), we verify that  $\beta({\bf{X}})$ takes the minimum value when $C_{{\bf{X}}}\left(\frac{\bf{1}}{\bf{2}}\right)+{\hat{C}}_{{\bf{X}}}\left(\frac{\bf{1}}{\bf{2}}\right)=0$ and $\displaystyle\sum_{i=1}^d\left(C_{\sigma_i\bf{X}}\left(\frac{\bf{1}}{\bf{2}}\right)+{\hat{C}}_{\sigma_i\bf{X}}\left(\frac{\bf{1}}{\bf{2}}\right)\right)=1$, what happens when, for example, $U_j=1-U_i$ for some pair $1\leq i<j\leq d$ and $U_k=V$ for each $k\in D\setminus \{i,j\}$, analogously to what we saw in the example  \ref{ex3}. 

The value of $\beta({\bf{X}})$ may not increase with the concordance of ${\bf{X}}$. We can verify this with an example proposed by an anonymous referee. 

Consider ${\bf{X}}$ and ${\bf{Y}}$ $4$-dimensional vetors with copulas, respectively,
\begin{eqnarray}\nonumber
\begin{array}{ll}
C_{\bf{X}}\left(u_1,u_2,u_3,u_4\right)=C_W(u_1,u_2)C_{\Pi}(u_3,u_4)
\end{array}
\end{eqnarray}
and
\begin{eqnarray}\nonumber
\begin{array}{ll}
C_{\bf{Y}}\left(u_1,u_2,u_3,u_4\right)=C_W(u_1,u_2)C_{M}(u_3,u_4),
\end{array}
\end{eqnarray}
where $C_{W}$ denotes the countermonotonicity copula, $C_{W}(u_1,u_2)=(u_1+u_2-1)\vee 0$. We have ${\bf{X}}{\prec}_{c} {\bf{Y}}$ and however $\beta({\bf{X}})=-\frac{1}{8}>-\frac{1}{4} =\beta({\bf{Y}})$.\\

If ${\bf{X}}{\prec}_{c} {\bf{Y}}$ and, for each $i\in D$,
\begin{eqnarray}\label{strongconcordance}
\left\{\begin{array}{ll}
C_{\sigma_i\bf{Y}}\left(\frac{\bf{1}}{\bf{2}}\right)\leq C_{\sigma_i\bf{X}}\left(\frac{\bf{1}}{\bf{2}}\right)\\
{\hat{C}}_{\sigma_i\bf{Y}}\left(\frac{\bf{1}}{\bf{2}}\right)\leq {\hat{C}}_{\sigma_i\bf{X}}\left(\frac{\bf{1}}{\bf{2}}\right)\,,\,\,i\in D
\end{array}\right.
\end{eqnarray}
then, from proposition \ref{representaçoesBeta}, (\ref{R3}), we can conclude that $\beta({\bf{X}})\leq \beta({\bf{Y}})$.\\

The verification of condition (\ref{strongconcordance}) together with ${\bf{X}}{\prec}_{c} {\bf{Y}}$, which can be illustrated with example  \ref{ex2}, tells us that, in addition to the propensity for all margins to exceed their respective medians or all margins to be below their medians to be higher in ${\bf{Y}}$, also the propensity for each margin to disagree with the remaining, in this sense, is lower in ${\bf{Y}}$, reinforcing the relation ${\bf{X}}{\prec}_{c} {\bf{Y}}$. 

When we have ${\bf{X}}{\prec}_{c} {\bf{Y}}$ and (\ref{strongconcordance}) we denote this type of relation by ${\bf{X}}{\prec}{\prec}_{c} {\bf{Y}}$.\\

The above properties on the values of the multivariate medial correlation coefficient are arranged in the following proposition.

\begin{pro}\label{valoresBeta}
	The values of the multivariate medial correlation coefficient for vectors of dimension $d$ satisfy the following properties:\\
(i) If ${\bf{X}}{\prec}{\prec}_{c} {\bf{Y}}$ then $\beta({\bf{X}})\leq \beta({\bf{Y}})$.\\
(ii) If $C_{\bf{X}}=C_{\Pi}$ then $\beta({\bf{X}})=0$.\\
(iii) If $C_{\bf{X}}=C_{W}$ then $\beta({\bf{X}})=1$.\\
(iv) The minimum attainable value for $\beta({\bf{X}})$ is $-\frac{1}{d}$.
\end{pro}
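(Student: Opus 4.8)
The plan is to base everything on the representations collected in Proposition~\ref{representaçoesBeta}, and in particular on the ``exceedance'' form (\ref{excedências}), namely $\beta({\bf X}) = P(\bar N = 0) + P(\bar N = d) - \frac{1}{d}\left(P(\bar N = 1) + P(\bar N = d-1)\right)$ with $\bar N = \sum_{i=1}^d \mathbf{1}_{\{U_i > 1/2\}}$. Properties (ii), (iii) and (iv) then reduce to understanding the law of the single integer-valued statistic $\bar N$, while (i) is a one-line consequence of (\ref{R3}). I would organise the four items in that order of increasing difficulty.

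For (i), I apply (\ref{R3}) to both vectors. The hypothesis ${\bf X}\prec_c {\bf Y}$ gives $C_{\bf X}(\tfrac{\bf 1}{\bf 2}) + \hat C_{\bf X}(\tfrac{\bf 1}{\bf 2}) \le C_{\bf Y}(\tfrac{\bf 1}{\bf 2}) + \hat C_{\bf Y}(\tfrac{\bf 1}{\bf 2})$, so the leading term does not decrease. The extra hypothesis (\ref{strongconcordance}) says that, for every $i$, the reflected quantities $C_{\sigma_i {\bf X}}(\tfrac{\bf 1}{\bf 2}) + \hat C_{\sigma_i {\bf X}}(\tfrac{\bf 1}{\bf 2})$ dominate the corresponding ones for ${\bf Y}$; since these enter (\ref{R3}) with a negative sign, the subtracted average is at least as large for ${\bf X}$ as for ${\bf Y}$. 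Combining the two effects yields $\beta({\bf X}) \le \beta({\bf Y})$.

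For (ii), under $C_{\bf X} = C_\Pi$ the indicators $\mathbf{1}_{\{U_i > 1/2\}}$ are independent Bernoulli variables with parameter $1/2$, so $\bar N \sim \mathrm{Binomial}(d, 1/2)$ and $P(\bar N = 0) = P(\bar N = d) = 2^{-d}$, $P(\bar N = 1) = P(\bar N = d-1) = d\cdot 2^{-d}$; substituting into (\ref{excedências}) gives $2\cdot 2^{-d} - \frac{1}{d}\cdot 2d\cdot 2^{-d} = 0$ (equivalently, one may invoke the value of $\beta_{\{i\}, D\setminus\{i\}}$ under independence, which vanishes because $|\{i\}| = 1$). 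For (iii), when the vector is comonotone, $C_{\bf X} = C_M$, all the $U_i$ coincide, so $\bar N$ takes only the values $0$ and $d$, each with probability $\tfrac12$; then $P(\bar N = 1) = P(\bar N = d-1) = 0$ and (\ref{excedências}) gives $\beta({\bf X}) = \tfrac12 + \tfrac12 = 1$.

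The substantive part is (iv). For the lower bound I would again use (\ref{excedências}): the two leading probabilities are nonnegative, and for $d \ge 3$ the events $\{\bar N = 1\}$ and $\{\bar N = d-1\}$ are disjoint, so $P(\bar N = 1) + P(\bar N = d-1) \le 1$, whence $\beta({\bf X}) \ge -\tfrac1d$. Attainment is supplied by the construction already used in Example~\ref{ex3}: take ${\bf U} = (U, 1-U, V, \dots, V)$, so that exactly one of the first two coordinates exceeds $\tfrac12$ while the remaining $d-2$ coordinates exceed $\tfrac12$ simultaneously or not at all; then $\bar N \in \{1, d-1\}$ almost surely, the leading probabilities vanish, $P(\bar N = 1) + P(\bar N = d-1) = 1$, and $\beta({\bf X}) = -\tfrac1d$. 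The main obstacle I anticipate is precisely this item: one must exhibit a copula making $\bar N$ concentrate on $\{1, d-1\}$ while keeping $\{0, d\}$ empty, and verify the disjointness used in the bound, which is why the argument needs $d \ge 3$ (for $d=2$ the events $\{\bar N = 1\}$ and $\{\bar N = d-1\}$ coincide and one recovers the Blomqvist range with minimum $-1$).
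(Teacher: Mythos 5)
Your proof is correct and rests on the same machinery as the paper's: item (i) follows from representation (\ref{R3}) combined with (\ref{strongconcordance}) exactly as in the text, items (ii) and (iii) are the values the paper reads off the bivariate coefficients $\beta_{I,J}$ under $C_{\Pi}$ and $C_M$, and attainment in (iv) uses the same construction as Example \ref{ex3}. The one place where you genuinely add something is the lower bound in (iv): the paper never actually proves that $-\frac{1}{d}$ is a lower bound --- it only asserts that the minimum occurs when $C_{\bf{X}}\left(\frac{\bf{1}}{\bf{2}}\right)+\hat{C}_{\bf{X}}\left(\frac{\bf{1}}{\bf{2}}\right)=0$ and $\sum_{i=1}^{d}\left(C_{\sigma_i{\bf{X}}}\left(\frac{\bf{1}}{\bf{2}}\right)+\hat{C}_{\sigma_i{\bf{X}}}\left(\frac{\bf{1}}{\bf{2}}\right)\right)=1$, tacitly assuming this sum can never exceed $1$. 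Your observation that, via (\ref{excedências}), this sum equals $P(\bar{N}=1)+P(\bar{N}=d-1)$ and is bounded by $1$ because the two events are disjoint when $d\geq 3$ supplies precisely the missing step. Your caveat about $d=2$ is also a real one: there $\{\bar{N}=1\}$ and $\{\bar{N}=d-1\}$ coincide, the coefficient reduces to Blomqvist's $\beta$ with minimum $-1$, so item (iv) as literally stated is false for $d=2$ and must be read with $d\geq 3$ (a restriction the proposition omits). Finally, you silently read item (iii)'s $C_W$ as the comonotonicity copula $C_M$; that is the correct reading --- it matches the paper's preceding discussion, and the countermonotonicity copula is not even a $d$-copula for $d>2$ --- so your comonotone and binomial computations give the intended values.
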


In the proposition below we present the properties of continuity, permutation invariance, duality,
reflection symmetry and transition, which together with (i)-(iii) of the previous proposition and following Taylor \cite{Taylor1}, \cite{Taylor2}, justifies  calling the proposed coefficient a measure for the concordance relation $\prec\prec_C$.

\begin{pro}\label{propriedadesBeta}
	The values of the multivariate medial correlation coefficient for vectors of dimension $d$ satisfy the following properties:\\
(i) If $\{C_{{\bf{X}}_n}\}_{n\geq 1}$ converges uniformly to $C_{\bf{X}}$, $n \to +\infty$, then $\displaystyle\lim_{n \to +\infty}\beta({\bf{X}}_n)=\beta({\bf{X}})$.\\
(ii) The value of  $\beta({\bf{X}})$ is invariant for permutations of the margins of ${\bf{X}}$.\\
(iii) $\beta({\bf{X}})=\beta(-{\bf{X}})$.\\
(iv) $\displaystyle\sum_{(\epsilon_1,...,\epsilon_d)\in \{-1,1\}^{d}}\beta(\epsilon_1X_1,...,\epsilon_dX_d)=0$.\\
(v) If ${\bf{Y}}$ is a $(d+1)$-dimensional random vector such that $C_{\bf{Y}}(u_1,...,u_{i-1},1,u_{i+1},...,u_d)=C_{\bf{X}}(u_1,...,u_{i-1},u_{i+1},...,u_d)$ then $\displaystyle\frac{d}{d+1}\beta({\bf{X}})=\beta({\bf{Y}})+\beta(\sigma_i{\bf{Y}}).$
\end{pro}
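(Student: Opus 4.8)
The plan is to read off all five properties from the representations already proved, using the copula form (\ref{R2}) for the continuity statement and the exceedance form (\ref{excedências}) for the symmetry and transition statements. Throughout I write $B_j=\mathbf{1}_{\{U_j>1/2\}}$, so that $\bar N=\sum_j B_j$; since the margins are continuous, $P(U_j=1/2)=0$ and the almost-sure manipulations below are legitimate. For (i), representation (\ref{R2}) exhibits $\beta({\bf{X}})$ as a fixed finite linear combination of $C_{\bf{X}}(\frac{\bf{1}}{\bf{2}})$, $\hat{C}_{\bf{X}}(\frac{\bf{1}}{\bf{2}})$ and their $(d-1)$-dimensional marginal analogues. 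Each of these is a continuous functional of $C_{\bf{X}}$: the numbers $C_{{\bf{X}}_{D\setminus\{i\}}}(\frac{\bf{1}}{\bf{2}})$ arise from $C_{\bf{X}}$ by fixing one coordinate at $1$, and each survival term is, by inclusion-exclusion, a finite signed sum of evaluations of the corresponding copula at fixed points. Uniform convergence $C_{{\bf{X}}_n}\to C_{\bf{X}}$ therefore forces convergence of every summand, hence $\beta({\bf{X}}_n)\to\beta({\bf{X}})$. Property (ii) is immediate from (\ref{excedências}): $\bar N$ is a symmetric function of the margins, so its law, and hence $\beta({\bf{X}})$, is unchanged by permutations. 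For (iii), reflecting every margin sends $U_j\mapsto 1-U_j$, so $B_j\mapsto 1-B_j$ almost surely and $\bar N\mapsto d-\bar N$; substituting $\bar N\mapsto d-\bar N$ into (\ref{excedências}) interchanges the levels $0\leftrightarrow d$ and $1\leftrightarrow d-1$ and leaves the right-hand side invariant, giving $\beta(-{\bf{X}})=\beta({\bf{X}})$.

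For (iv) I pass to the cell probabilities $p_b=P((B_1,\dots,B_d)=b)$, $b\in\{0,1\}^d$. A sign choice $(\epsilon_1,\dots,\epsilon_d)$ corresponds to $e\in\{0,1\}^d$ with $e_j=\mathbf{1}_{\{\epsilon_j=-1\}}$, and flipping those margins replaces $B$ by $B\oplus e$, so the flipped vector assigns mass $p_{c\oplus e}$ to cell $c$. Writing each $\beta(\epsilon_1X_1,\dots,\epsilon_dX_d)$ through (\ref{excedências}) in terms of these masses and summing over all $e\in\{0,1\}^d$, the key observation is that for any fixed $c$ the coset sum $\sum_{e}p_{c\oplus e}=\sum_b p_b=1$. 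Hence the two extreme levels $0$ and $d$ contribute $1+1=2$, while the weight-$1$ and weight-$(d-1)$ cells yield $\binom{d}{1}=\binom{d}{d-1}=d$ coset sums each, so that $\sum_e\beta=2-\frac1d(d+d)=0$.

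The transition property (v) is the real work. Here ${\bf{Y}}$ is $(d+1)$-dimensional and the hypothesis means exactly that the subvector ${\bf{Y}}_{D'\setminus\{i\}}$, with $D'=\{1,\dots,d+1\}$, has the same copula as ${\bf{X}}$; in particular $\bar N'=\sum_{j\ne i}B_j^{\bf{Y}}$ has the same law as $\bar N_{\bf{X}}$, say $P(\bar N'=k)=q_k$. Writing $B=B_i^{\bf{Y}}$, one has $\bar N_{\bf{Y}}=\bar N'+B$ while reflecting the $i$-th margin gives $\bar N_{\sigma_i{\bf{Y}}}=\bar N'+(1-B)$. I would expand $\beta({\bf{Y}})$ and $\beta(\sigma_i{\bf{Y}})$ from (\ref{excedências}) in dimension $d+1$ in terms of the joint masses $a_k^{0}=P(\bar N'=k,B=0)$ and $a_k^{1}=P(\bar N'=k,B=1)$, and add. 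Upon summation the $B=0$ and $B=1$ pieces pair up through $a_k^0+a_k^1=q_k$: the two extreme terms combine to $q_0+q_d$, and the four terms at exceedance levels $1$ and $d$ combine to $\frac{1}{d+1}(q_0+q_1+q_{d-1}+q_d)$, yielding
\begin{eqnarray}\nonumber
\beta({\bf{Y}})+\beta(\sigma_i{\bf{Y}})=(q_0+q_d)-\tfrac{1}{d+1}(q_0+q_1+q_{d-1}+q_d)=\tfrac{d}{d+1}\left((q_0+q_d)-\tfrac1d(q_1+q_{d-1})\right),
\end{eqnarray}
which is $\frac{d}{d+1}\beta({\bf{X}})$ by (\ref{excedências}). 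The main obstacle is the bookkeeping in this last step: correctly listing which $(\bar N',B)$ combinations fall into each of the events $\{\bar N_{\bf{Y}}\in\{0,1,d,d+1\}\}$ and $\{\bar N_{\sigma_i{\bf{Y}}}\in\{0,1,d,d+1\}\}$, and checking that the mismatched weights $1/d$ versus $1/(d+1)$ reconcile after the pairing. Everything else reduces to routine identification via the $q_k$ and the relation $\bar N'\overset{d}{=}\bar N_{\bf{X}}$ guaranteed by the hypothesis.
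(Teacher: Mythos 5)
Your proof is correct and is essentially the paper's own argument: like the paper, you read (i) off (\ref{R2}) and (ii) off (\ref{excedências}), and the key step in your proof of (v) — adding the reflected and unreflected terms so that the $i$-th coordinate is marginalized out, via $a_k^0+a_k^1=q_k$ — is exactly the pairing the paper performs, only written there as copula identities of the form $C_{\bf{Y}}\left(\frac{\bf{1}}{\bf{2}}\right)+C_{\sigma_i\bf{Y}}\left(\frac{\bf{1}}{\bf{2}}\right)=C_{\bf{X}}\left(\frac{\bf{1}}{\bf{2}}\right)$ rather than in terms of exceedance-count masses. The only cosmetic difference is that for (iii)--(v) you use the representation (\ref{excedências}) where the paper invokes (\ref{R3}); since (\ref{excedências}) is itself derived from (\ref{R3}), the two bookkeepings are mechanically equivalent, and your coset computation for (iv) and count computation for (v) both check out.
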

\begin{proof}
The statement of (i) can be obtained, for example, from (\ref{R2}).
From the representation  (\ref{excedências})  we can conclude (ii).
The representation (\ref{R3})  leads to (iii) and (iv).
Finally to obtain (v), let us note that, by (\ref{R3}), we have
\begin{eqnarray}\nonumber%\label{transição}
\begin{array}{rl}
\vspace{0,5cm}
&\beta({\bf{Y}})+\beta(\sigma_i{\bf{Y}})\\
=&C_{\bf{Y}}\left(\frac{\bf{1}}{\bf{2}}\right)+C_{\sigma_i\bf{Y}}\left(\frac{\bf{1}}{\bf{2}}\right)+{\hat{C}}_{\bf{Y}}\left(\frac{\bf{1}}{\bf{2}}\right)+{\hat{C}}_{\sigma_i\bf{Y}}\left(\frac{\bf{1}}{\bf{2}}\right)\\
\vspace{0,5cm}
&-\displaystyle\frac{1}{d+1}\left(C_{\sigma_i\bf{Y}}\left(\frac{\bf{1}}{\bf{2}}\right)+{\hat{C}}_{\sigma_i\bf{Y}}\left(\frac{\bf{1}}{\bf{2}}\right)+C_{\sigma_i\sigma_i\bf{Y}}\left(\frac{\bf{1}}{\bf{2}}\right)+{\hat{C}}_{\sigma_i\sigma_i\bf{Y}}\left(\frac{\bf{1}}{\bf{2}}\right)\right)\\
\vspace{0,5cm}
&-\displaystyle\frac{1}{d+1}\sum_{j=1,j\neq i}^{d+1}\left(C_{\sigma_j\bf{Y}}\left(\frac{\bf{1}}{\bf{2}}\right)+C_{\sigma_j\sigma_i\bf{Y}}\left(\frac{\bf{1}}{\bf{2}}\right)+{\hat{C}}_{\sigma_j\bf{Y}}\left(\frac{\bf{1}}{\bf{2}}\right)+{\hat{C}}_{\sigma_j\sigma_i\bf{Y}}\left(\frac{\bf{1}}{\bf{2}}\right)\right)\\
\vspace{0,5cm}
=&C_{\bf{X}}\left(\frac{\bf{1}}{\bf{2}}\right)+{\hat{C}}_{\bf{X}}\left(\frac{\bf{1}}{\bf{2}}\right)-\displaystyle\frac{1}{d+1}\left(C_{\bf{X}}\left(\frac{\bf{1}}{\bf{2}}\right)+{\hat{C}}_{\bf{X}}\left(\frac{\bf{1}}{\bf{2}}\right)\right)\\
\vspace{0,5cm}
&-\displaystyle\frac{1}{d+1}\sum_{j=1}^{d}\left(C_{\sigma_j\bf{X}}\left(\frac{\bf{1}}{\bf{2}}\right)+{\hat{C}}_{\sigma_j\bf{X}}\left(\frac{\bf{1}}{\bf{2}}\right)\right)\\
\vspace{0,5cm}
=&\displaystyle\frac{d}{d+1}\left(C_{\bf{X}}\left(\frac{\bf{1}}{\bf{2}}\right)+{\hat{C}}_{\bf{X}}\left(\frac{\bf{1}}{\bf{2}}\right)\right)
-\displaystyle\frac{d}{d+1}\displaystyle\frac{1}{d}\sum_{j=1}^{d}\left(C_{\sigma_j\bf{X}}\left(\frac{\bf{1}}{\bf{2}}\right)+{\hat{C}}_{\sigma_j\bf{X}}\left(\frac{\bf{1}}{\bf{2}}\right)\right),
\end{array}
\end{eqnarray}
that matches $\displaystyle\frac{d}{d+1}\beta({\bf{X}})$, applying again (\ref{R3}).
\end{proof}

\section{Application to real data}

The multivariate medial correlation coefficient in (\ref{defbetamulti}) can be estimated through the bivariate coefficients in (\ref{defbeta,i,D-i}). Here we consider the respective empirical counterparts. This estimation procedure has already been addressed in literature (Blomqvist \cite{Blom}, Schmid and Schmidt \cite{S+S} and references therein). 

Let $(X_{1,j},...,X_{d,j})$, $j=1,...,n$, be a random sample generated from $(X_{1},...,X_{d})$. Consider $$\hat{U}_{i,j}=\hat{F}_{X_i}(X_{i,j})=\frac{1}{n+1}\sum_{l=1}^{n}\mathds{1}_{\{X_{i,l}\leq X_{i,j}\}},\, i=1,...,d,\, j=1,...,n\,,$$
as well as, $\hat{M}_j\left(D\setminus \{i\}\right)=\bigvee_{r\in D\setminus \{i\}} \hat{U}_{r,j}$ and $\hat{W}_j\left(D\setminus \{i\}\right)=\bigwedge_{r\in D\setminus \{i\}} \hat{U}_{r,j}$. Based on (\ref{defbetamulti}) we define
\begin{eqnarray}\label{beta_estim}
\hat{\beta}= \frac{1}{d}\sum_{i=1}^{d}\hat{\beta}_{\{i\},D\setminus \{i\}},
\end{eqnarray}
where, according to (\ref{defbeta,i,D-i}), we take
$$
\hat{\beta}_{\{i\},D\setminus \{i\}}=\frac{\hat{\beta}\left(\hat{U}_{i},\hat{M}\left(D\setminus \{i\}\right)\right)+\hat{\beta}\left(\hat{U}_{i},\hat{W}\left(D\setminus \{i\}\right)\right)}{2},
$$
with
\begin{eqnarray}\nonumber
\begin{array}{rl}
&\hat{\beta}\left(\hat{U}_{i},\hat{M}\left(D\setminus \{i\}\right)\right)\\
=&\displaystyle 2\left(\frac{1}{n}\sum_{j=1}^{n}\left(\mathds{1}_{\{\hat{U}_{i,j}\leq 1/2\}}\mathds{1}_{\{\hat{M}_j(D\setminus \{i\})\leq 1/2\}}+\mathds{1}_{\{\hat{U}_{i,j}> 1/2\}}\mathds{1}_{\{\hat{M}_j(D\setminus \{i\})> 1/2\}}\right)\right)-1
\end{array}
\end{eqnarray}
and
\begin{eqnarray}\nonumber
\begin{array}{rl}
&\hat{\beta}\left(\hat{U}_{i},\hat{W}\left(D\setminus \{i\}\right)\right)\\
=&\displaystyle 2\left(\frac{1}{n}\sum_{j=1}^{n}\left(\mathds{1}_{\{\hat{U}_{i,j}\leq 1/2\}}\mathds{1}_{\{\hat{W}_j(D\setminus \{i\})\leq 1/2\}}+\mathds{1}_{\{\hat{U}_{i,j}> 1/2\}}\mathds{1}_{\{\hat{W}_j(D\setminus \{i\})> 1/2\}}\right)\right)-1.
\end{array}
\end{eqnarray}

We are going to apply the multivariate medial correlation coefficient estimator $\hat{\beta}$ in (\ref{beta_estim}) on two datasets. 

First, we consider the main GDP aggregates per capita in the European Union (EU), Germany and Portugal, available in \url{https://ec.europa.eu/eurostat/data/database}. We consider anual data from 2008 to 2019. The respective scatterplots are in Figure \ref{Fig1}. Germany and EU seem the most correlated. The estimates of the bivariate coefficients $\beta_{\{i\},D\setminus \{i\}}$ and of the multivariate medial correlation coefficient $\beta$ are in Table \ref{tab1}. We can see that the bivariate medial correlation between Portugal and the remaining EU and Germany presents the lowest contribution to the multivariate medial correlation.

\begin{center}
	\begin{figure}[h!]
		\includegraphics[width=5cm,height=5cm]{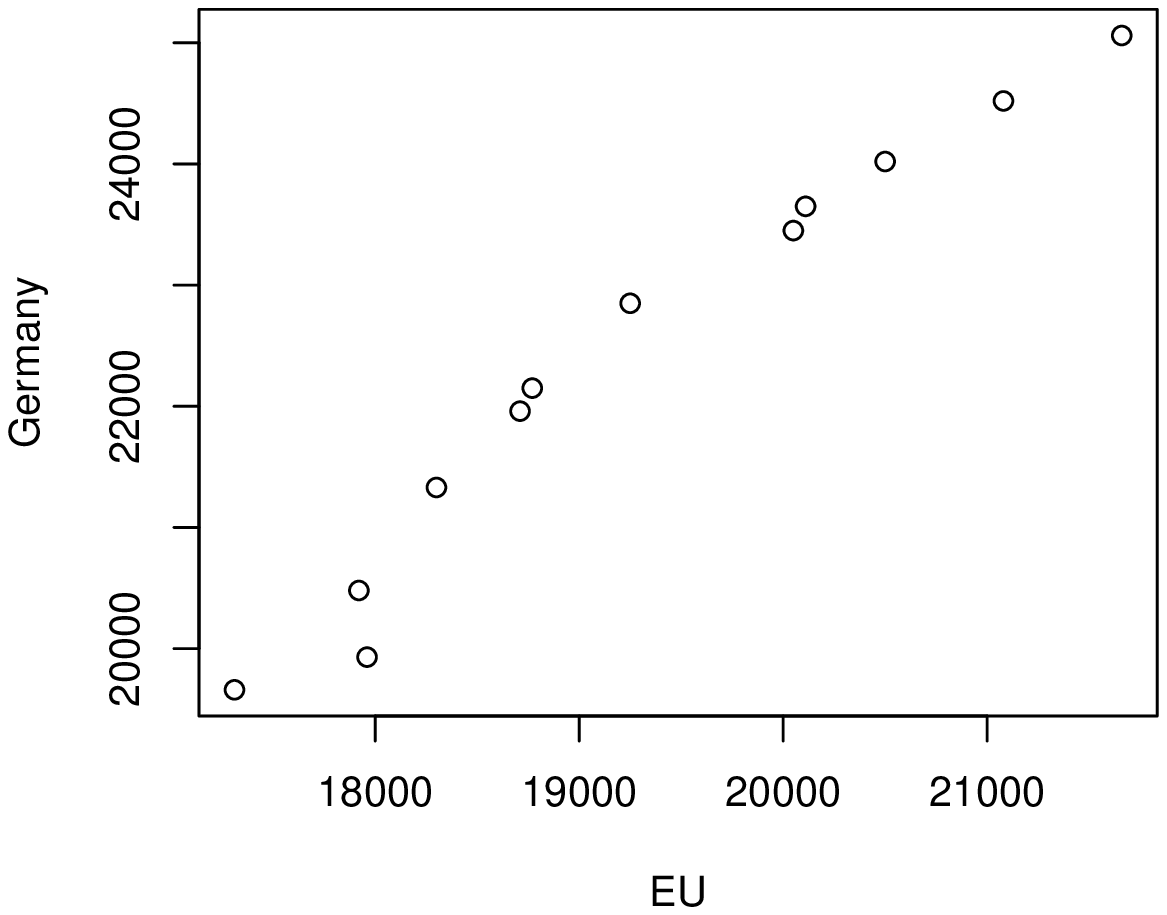}
		\includegraphics[width=5cm,height=5cm]{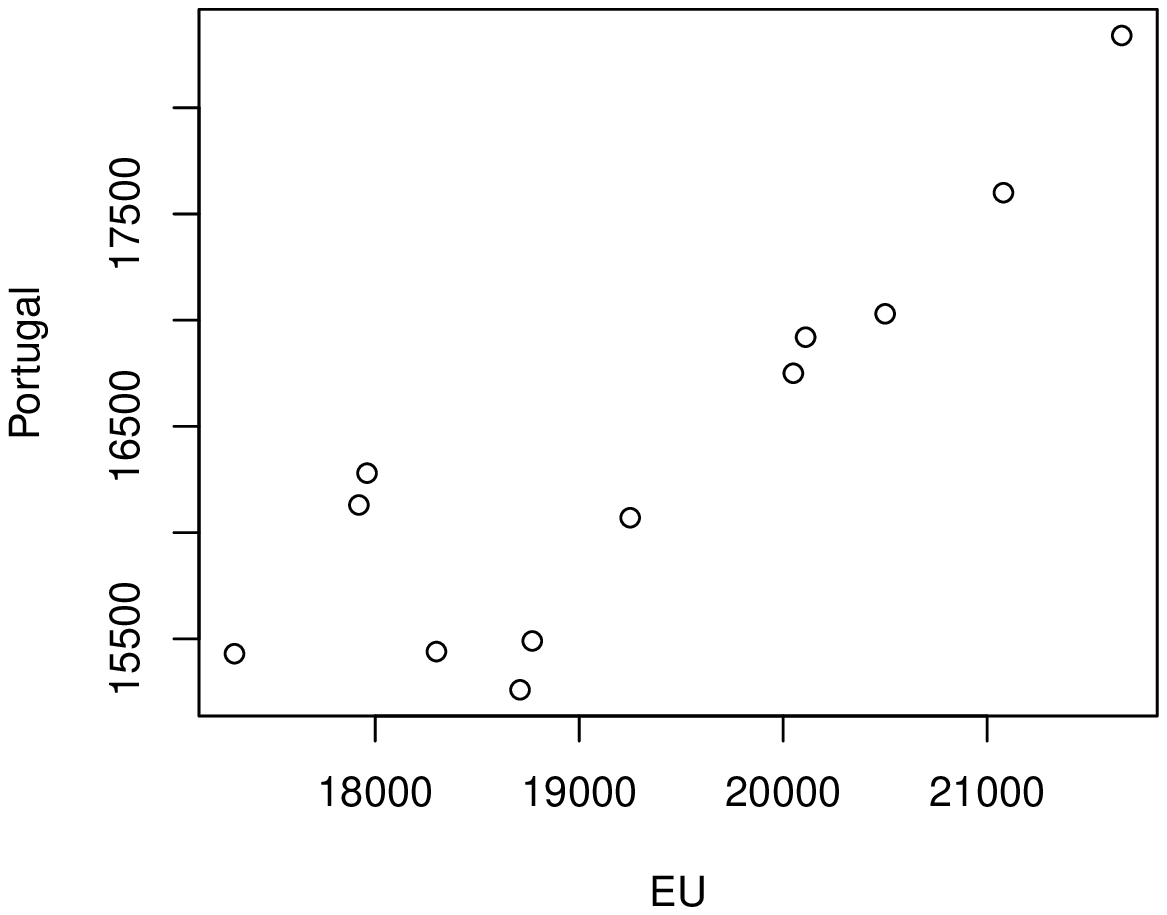}
		\includegraphics[width=5cm,height=5cm]{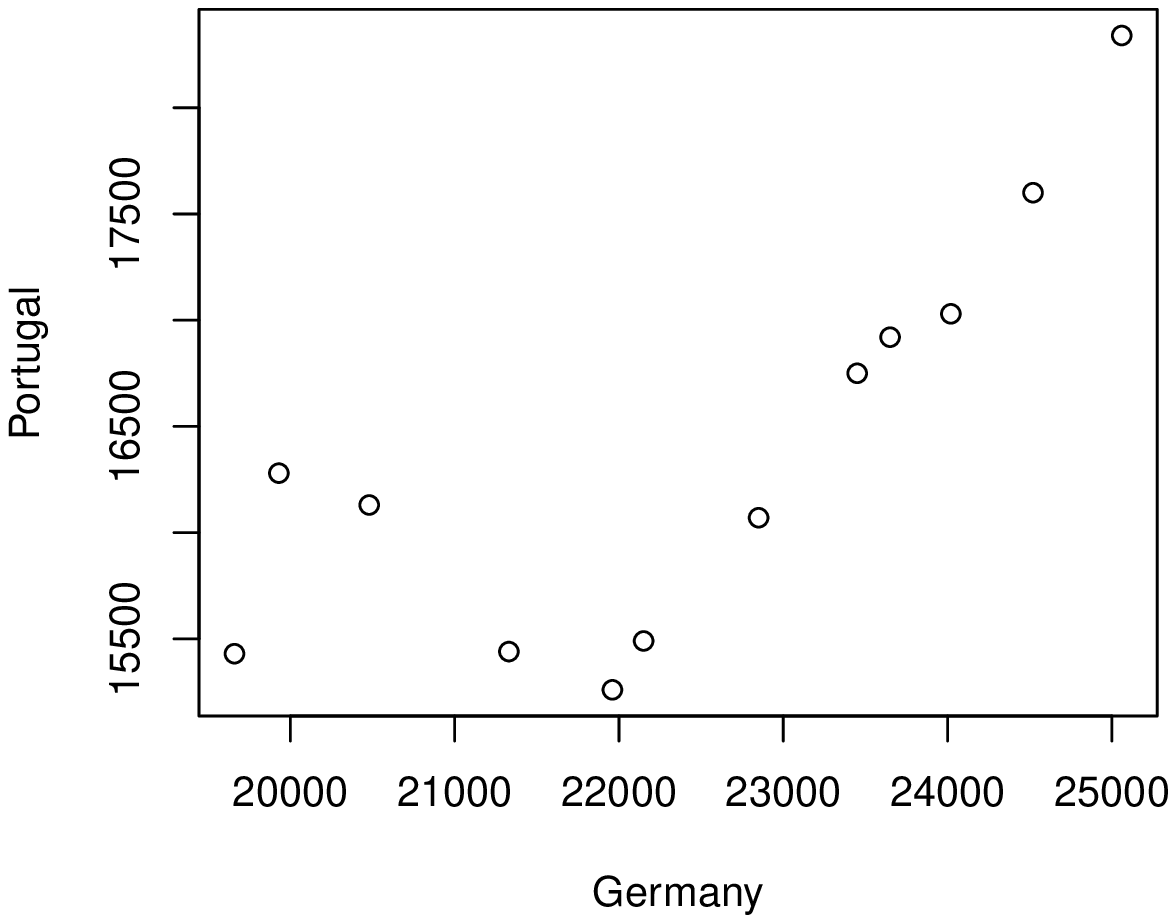}
		\caption{Anual main GDP aggregates per capita in the European Union versus Germany (left), European Union versus Portugal (center) and Germany versus Portugal (right). \label{Fig1}}
	\end{figure}
\end{center}

\begin{table}[h!]
	\begin{center}
		\caption{Estimates of the bivariate coefficients $\beta_{\{i\},D\setminus \{i\}}$ and of the multivariate medial correlation coefficient $\beta$ of the anual main GDP aggregates per capita in the European Union, Germany and Portugal, from 2008 to 2019. \label{tab1}}
		\begin{tabular}{c|c|c|c}
			$\{i\}$ & $D\setminus \{i\}$ & $\hat{\beta}_{\{i\},D\setminus \{i\}}$ & $\hat{\beta}$\\
			\hline
			\{EU\} & \{Germany, Portugal\} & 0.833 &\\
			\{Germany\} & \{EU, Portugal\} & 0.833 & 0.778\\
			\{Portugal\} & \{EU, Germany\} & 0.667 &\\
		\end{tabular}
	\end{center}
\end{table}

Now we consider a dataset related to white variants of the Portuguese ``Vinho Verde" wine, available in \url{http://archive.ics.uci.edu/ml/datasets/Wine+Quality}. See also Cortez \emph{et al.} (\cite{Cortez+}). Our analysis focuses on variables \textit{residual sugar}, \textit{density} and \textit{alcohol}, whose respective scatterplots are plotted in Figure \ref{Fig3}. It is visible some negative association between alcohol and density, as well as, between alcohol and residual sugar. On the other hand, density  and residual sugar are positively correlated. The estimates of the bivariate coefficients $\beta_{\{i\},D\setminus \{i\}}$ and of the multivariate medial correlation coefficient $\beta$ (Table \ref{tab3}) reflect this lack of concordance, with a larger negative  bivariate coefficient between alcohol and the remaining variables.

\begin{center}
	\begin{figure}[h!]
		\includegraphics[width=5cm,height=5cm]{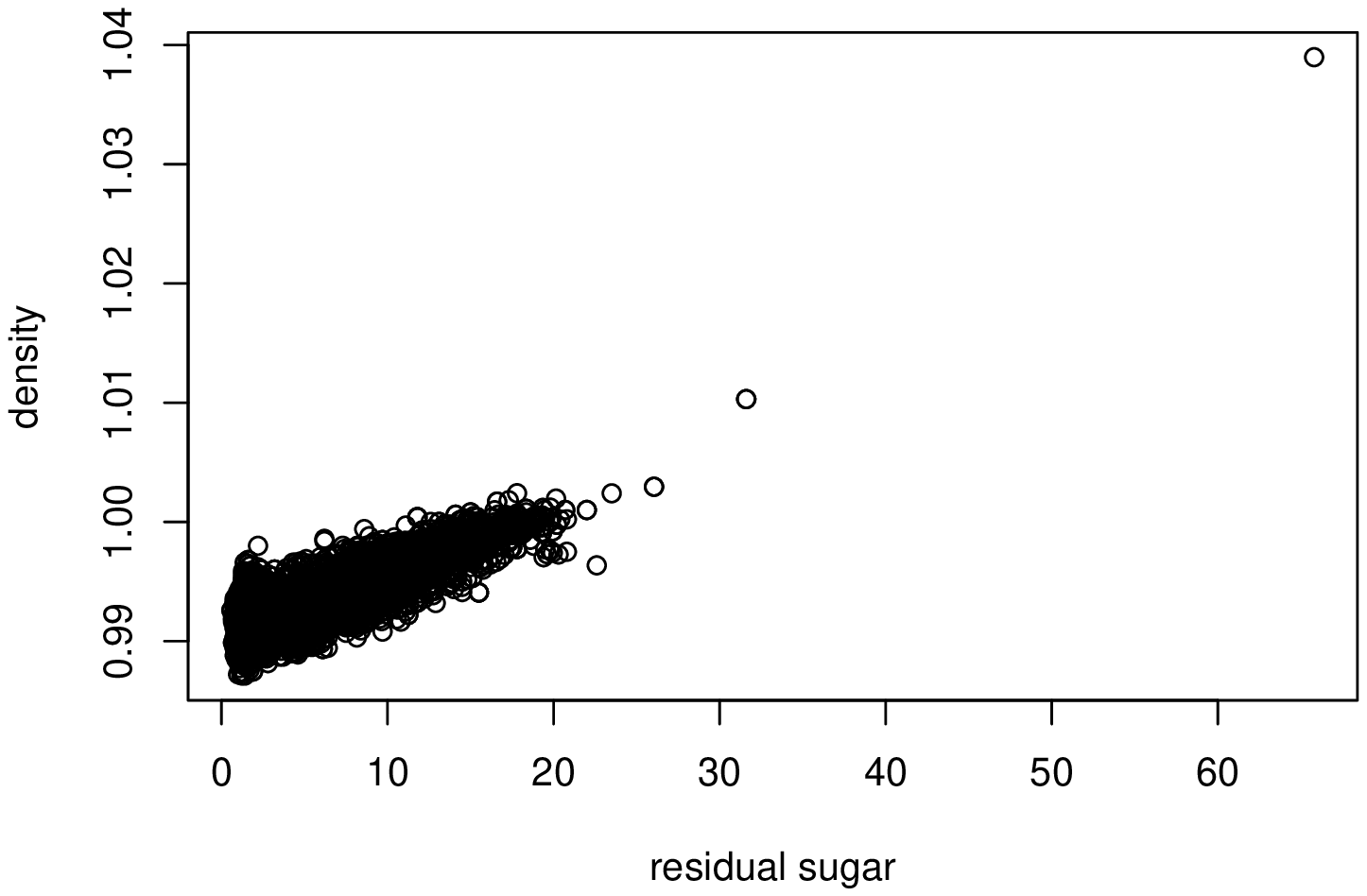}
		\includegraphics[width=5cm,height=5cm]{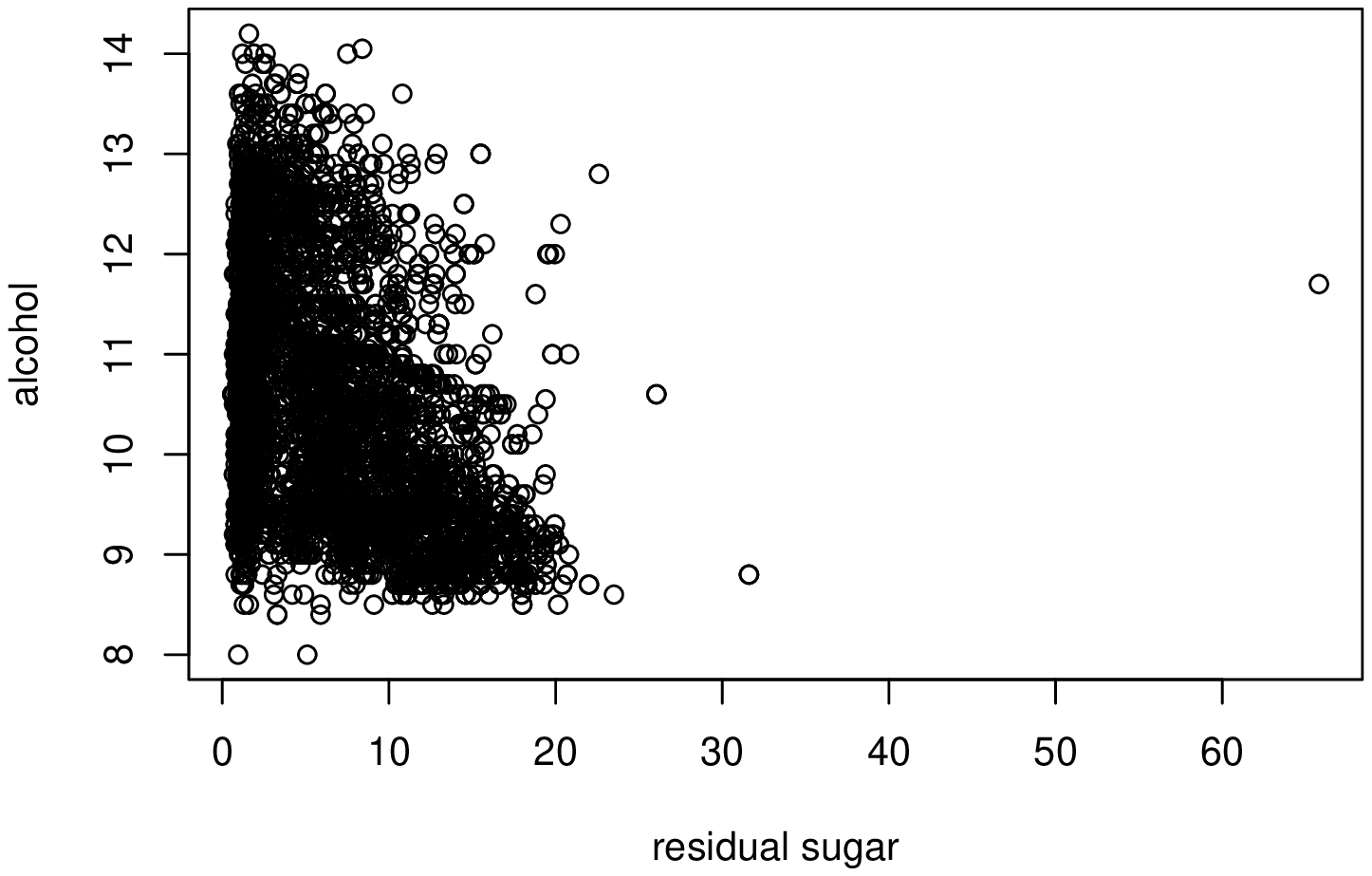}
		\includegraphics[width=5cm,height=5cm]{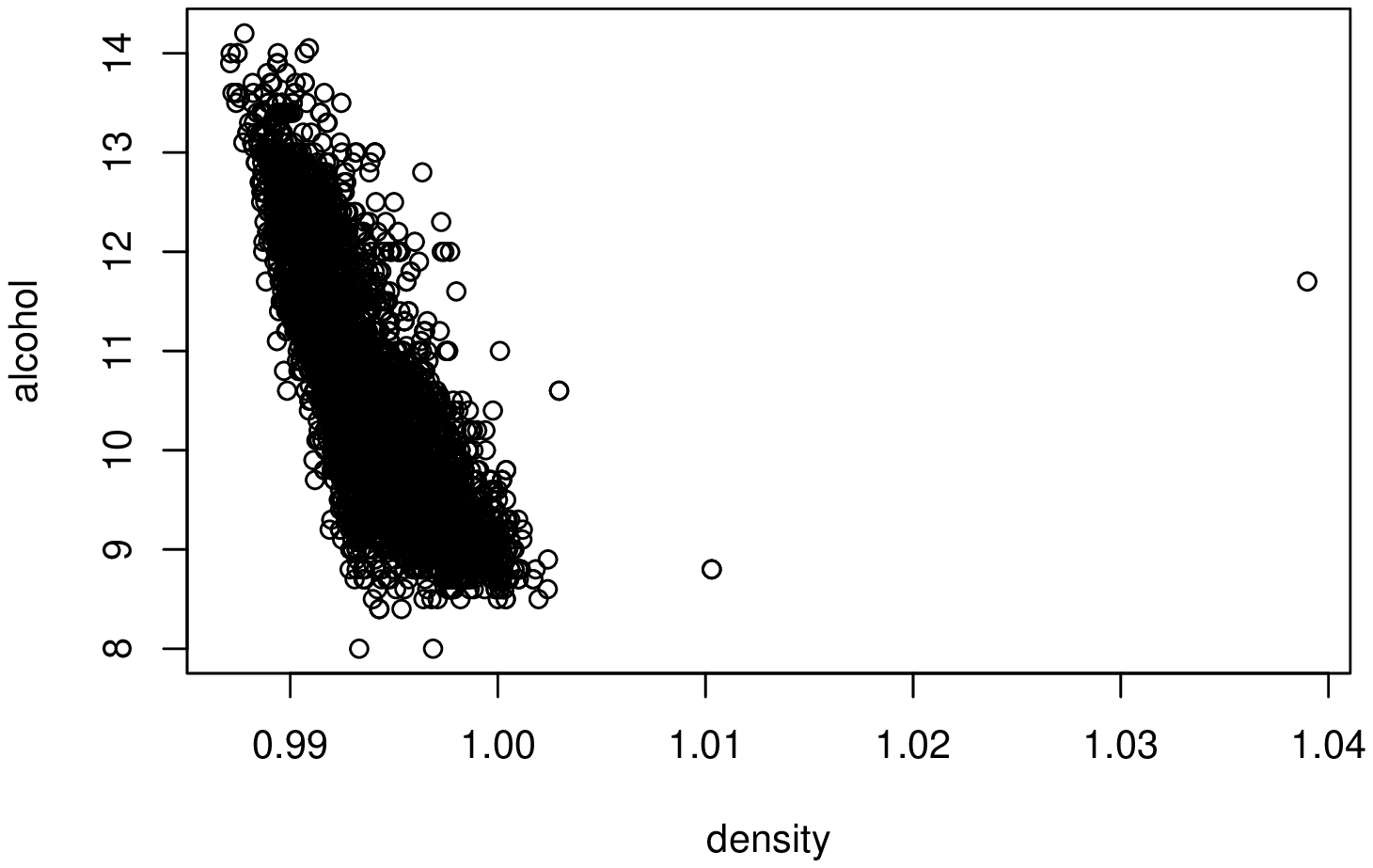}
		\caption{Scatterplots of  the variables \textit{residual sugar} versus \textit{density} (left), \textit{residual sugar} versus \textit{alcohol} (center) and \textit{density} versus \textit{alcohol} (right) within the wine dataset. \label{Fig3}}
	\end{figure}
\end{center}

\begin{table}[h!]
	\begin{center}
		\caption{Estimates of the bivariate coefficients $\beta_{\{i\},D\setminus \{i\}}$ and of the multivariate medial correlation coefficient $\beta$ for the variables \textit{residual sugar}, \textit{density} and \textit{alcohol} within the wine dataset.  \label{tab3}}
		\begin{tabular}{c|c|c|c}
			$\{i\}$ & $D\setminus \{i\}$ & $\hat{\beta}_{\{i\},D\setminus \{i\}}$ & $\hat{\beta}$\\
			\hline
			$\{\textrm{residual sugar}\}$ & $\{\textrm{density, alcohol}\}$ & 0.250 &\\
			$\{\textrm{density}\}$ & $\{\textrm{residual sugar, alcohol}\}$ & 0.179 & 0\\
			$\{\textrm{alcohol}\}$ & $\{\textrm{residual sugar, density}\}$ &-0.429 &\\
		\end{tabular}
	\end{center}
\end{table}

\section{Conclusion}

The multivariate medial correlation coefficient that we propose extends the probabilistic interpretation and properties of the Blomqvist $\beta$ coefficient, it is calculable from the copula, incorporates the dependence between each margin of the vector and the vector of the remaining margins and is a measure of a strong mode of multivariate concordance.

The estimation is addressed based on bivariate inferential methodology existing in literature and we illustrate its application using real data.

The adopted approach envisages the possibility of considering other functions of bivariate coefficients envolving  extremes of subvectors of ${\bf{X}}$, as well as the possibility of adapting the method to generalize other coefficients of bivariate dependence.

\section*{Acknowledgements}

The authors thank the reviewers and the associated editor for the very important and valuable comments that contributed to the improvement of this work. 

\noindent The first author was partially supported by the research unit Centre of Mathematics and Applications of University of Beira Interior
UIDB/00212/2020 - FCT (Funda\c c\~ao para a Ci\^encia e a Tecnologia).  
The second author was financed by Portuguese Funds through FCT - Fundação para a Ciência e a Tecnologia within the Projects UIDB/00013/2020 and UIDP/00013/2020 of Centre of Mathematics of the University of Minho, UIDB/00006/2020 of Centre of Statistics and its Applications of University of Lisbon and PTDC/MAT-STA/28243/2017.

%------------------------------------BIBLIOGRAFIA----------------
\bigskip


\begin{thebibliography}{}

\bibitem{Blom} Blomqvist, N. (1950). On a measure of dependence between two random variables. Ann. Math. Statist. 21, 593--600.	

\bibitem{Cortez+} Cortez, P., Cerdeira, A., Almeida, F., Matos, T. and Reis, J. (2009). Modeling wine preferences by data mining from physicochemical properties. In Decision Support Systems, Elsevier 47(4), 547--553.

\bibitem{Joe1}Joe, H. (1997). \textit{Multivariate Models and Dependence Concepts}. Chapman and Hall London.

\bibitem{Joe2}Joe, H. (1990). Multivariate Concordance. J. Multivariate Anal. 35, 12--30.

\bibitem{Joe3} Joe, H. (2015). Dependence Modeling with Copulas. Monographs on Statistics and Applied Probability 134. CRC Press, Boca Raton, FL.

\bibitem{Leb} Lebedev, A.V. (2019). On the Interrelation between Dependence Coefficients of Bivariate Extreme Value Copulas. Markov Process. Related Fields 25, 639--648.

\bibitem{Nelsen2} Nelsen, R. B. (2002). Concordance and copulas: A survey, Distributions with Given Marginals and
Statistical Modelling (eds. C. Cuadras, J. Fortiana and J. A. Rodrfguez), 169 178, Kluwer Academic Publishers, Dordrecht.

\bibitem{Nelsen} R. Nelsen (2006). \textit{ An Introduction to Copulas}. Springer, New York.

\bibitem{Scarsini} Scarsini, M. (1984) On Measures of Concordance. Stochastica 8(3), 201--218.

\bibitem{S+S} Schmid, F. and Schmidt, R. (2007). Nonparametric inference on multivariate versions of Blomqvist’s beta and related measures of tail dependence. Metrika 66, 323--354.

\bibitem{Taylor1} Taylor, M. D. (2007). Multivariate measures of concordance. Ann. Inst. Statist. Math. 59(4), 789--806.

\bibitem{Taylor2} Taylor, M. D. (2016). Multivariate measures of concordance for copulas and their marginals. Depend. Model. 4, 224--236


\bibitem{U-F} Úbeda-Flores, M. (2005) Multivariate versions of Blomqvist’s beta and Spearman’s footrule. Ann. Inst. Statist. Math. 57(4), 781--788.

\end{thebibliography}
\end{document}